\documentclass{article}
\usepackage[utf8]{inputenc}
\usepackage{amsmath}
\usepackage{amssymb}
\usepackage{amsthm}
\usepackage{mathtools}
\usepackage{xcolor}
\usepackage{comment}
\usepackage{subfigure}
\usepackage{authblk}

\newtheorem{theorem}{Theorem}
\newtheorem{proposition}{Proposition}
\newtheorem{lemma}{Lemma}
\newtheorem{assumption}{Assumption}
\newtheorem{remark}{Remark}
\newtheorem{definition}{Definition}
\usepackage{algorithm}
\usepackage{algpseudocode}

\usepackage{xcolor}

\usepackage[margin=3cm]{geometry}

\title{Towards practical PDMP sampling: Metropolis adjustments, locally adaptive step-sizes, and NUTS-based time lengths}

\author[1]{A. Chevallier}
\affil[1]{Université de Strasbourg}
\author[2]{S. Power}
\affil[2]{University of Bristol}
\author[3]{M. Sutton}
\affil[3]{Queensland University of Technology}

\begin{document}

\maketitle

\begin{abstract}
Piecewise-Deterministic Markov Processes (PDMPs) hold significant promise for sampling from complex probability distributions. However, their practical implementation is hindered by the need to compute model-specific bounds. Conversely, while Hamiltonian Monte Carlo (HMC) offers a generally efficient approach to sampling, its inability to adaptively tune step sizes impedes its performance when sampling complex distributions where scales differ dramatically across regions of the space.

To address these limitations, we introduce three innovative concepts: (a) a Metropolis-adjusted approximation for PDMP simulation that eliminates the need for explicit bounds without compromising the invariant measure, (b) an adaptive step size mechanism compatible with the Metropolis correction, and (c) a No U-Turn Sampler (NUTS)-inspired scheme for dynamically selecting path lengths in PDMPs. These three ideas can be seamlessly integrated into a single, `doubly-adaptive' PDMP sampler with favourable robustness and efficiency properties.
\end{abstract}

\section{Introduction}
Sampling from complex probability distributions is a fundamental task in Bayesian statistics. Piecewise Deterministic Markov Processes (PDMPs) have emerged as a promising class of methods for this task, with various compelling theoretical and practical features \cite{fearnhead2018piecewise,bouchard2018bouncy,bierkens2019zig}. However, despite their promise, the practical implementation of PDMP-based sampling methods has been hindered by the inherent computational complexities involved in simulating these processes. Existing implementations tend to rely on some combination of i) a priori analytical computations, ii) a priori quantitative knowledge of bounds and/or shape constraints on the target measure \cite{sutton2023concave}, and iii) the use of heuristic approximations, introducing a bias which is not necessarily addressed directly \cite{bertazzi2022approximations, bertazzi2301piecewise, corbella2022automatic, andral2024automated, pagani2024nuzz}.

Mainstream PDMP samplers are velocity-based algorithms, a class which also encompasses Hamiltonian Monte Carlo (HMC) \cite{neal2011mcmc}, Underdamped Langevin Monte Carlo (uLMC) \cite{cheng2018underdamped}, and others. These algorithms have broadly demonstrated their effectiveness in Bayesian sampling and make use of cleverly-designed kinetic dynamical systems as a means to navigate complex state spaces, whether deterministic, diffusive, piecewise-deterministic, or otherwise. A recurrent challenge for such methods is to design suitable protocols for updating the velocity variable over time, whether through discrete or continuous refreshment mechanisms. 

In the context of Hamiltonian Monte Carlo in particular, the No-U-Turn Sampler (NUTS) has emerged as a solution to this challenge. The core idea of the NUTS mechanism is to adaptively determine the integration time by terminating trajectories when they start to retrace their steps \cite{hoffman2014no}. While HMC-NUTS has been a substantial development \cite{Stan} for the practical adoption of HMC in the Bayesian community, the algorithm is not without limitations. A notable drawback is its lack of local adaptivity in choosing the step size, which can impact the efficiency of exploring target distributions, particularly those with spatial heterogeneity of conditioning. Indeed, HMC relies on a Metropolis correction which imposes that if a transition from $x$ to $y$ is possible, then the reverse transition from $y$ to $x$ must also be valid, which makes a naive local adaptation of the step-size impossible. 

However, recent advances have introduced strategies to enable adaptive step sizing while preserving the Metropolis correction. One approach incorporates the step size into the state space of the Markov chain itself \cite{bourabee2025gistgibbsselftuninglocally, bou2024incorporating, liu2024autostep}, while another leverages delayed rejection techniques \cite{Modi2024, modi2024atlas, turok2024sampling}. In both of these methods, the step size remains constant throughout a single HMC iteration but is adapted between iterations. The recently proposed Within Orbit Adaptive Leapfrog NUTS (WALNUTS) \cite{bourabee2025withinorbitadaptiveleapfrognouturn} takes a different approach. Rather than holding the step size constant, WALNUTS divides the time of a NUTS orbit into fixed macro steps. At each macro step, the algorithm tests a schedule of doubling and halving to find the largest micro step size that keeps the local energy error below a specific limit. This allows the sampler to use large steps in flat areas and small steps in curved areas during a single trajectory without violating detailed balance.

In this paper, we introduce three novel algorithms that make PDMP-based sampling both practical and robust. 
\begin{enumerate} 
\item Our first innovation is a `Metropolized' treatment of PDMPs, which enables the use of numerical approximations in PDMP simulation without introducing additional systematic bias in the ergodic properties of the sample. Moreover, our particular framing has the significant advantage of interfacing gracefully with local adaptivity in the selection of step-sizes and other computational hyperparameters.

\item The second innovation is to extend the use of the NUTS criterion to (exact) PDMPs, addressing the refreshment challenge specific to this class of processes. This method is not confined to the No-U-Turn criterion and can accommodate alternative criteria for adaptivity, making it a versatile addition to the PDMP sampling toolkit.

\item Finally, our third innovation combines the strengths of the previous developments into a comprehensive framework for the practical deployment of PDMP samplers, without requiring a priori knowledge of bounds or introducing bias, where \emph{both} the step size and the iteration-specific path length are adapted locally. This combined approach provides a robust and practical solution for a wide range of complex target distributions.
\end{enumerate}

We explored the empirical asymptotic scaling of the algorithms and compared them against standard HMC-NUTS on two challenging examples, observing improved robustness and performance. Notably, our experiments show that our algorithms outperform HMC, particularly in challenging scenarios such as the Funnel distribution proposed by Radford Neal \cite{neal2003slice}. In general, we expect our algorithms to excel (in relative terms) when dealing with complex state spaces characterized by multiple spatial lengthscales.

The paper is organized as follows. The next section introduces key concepts, including PDMP samplers, HMC, and the No-U-Turn Sampler. Section \ref{sec:metropolis} discusses the Metropolization of approximate PDMP samplers, while Section \ref{sec:nuts} presents a stopping criterion for determining path length in exact PDMP samplers. Section \ref{sec:full-algo} integrates these ideas, introducing a Metropolized PDMP approximation with locally adaptive step size and path length. Finally, Section \ref{sec:experiments} showcases numerical experiments and the paper concludes with a discussion in Section \ref{sec:disc}.



\newpage

\section{Concepts}
\subsection{Piecewise-Deterministic Markov Processes}
\label{subsec:concepts-PDMP}
Piecewise-deterministic Markov processes (PDMPs) are a class of algorithms for Markov chain Monte Carlo (MCMC) sampling. They are defined as continuous-time stochastic processes which evolve in a piecewise-deterministic manner, in the sense that the process i) moves along deterministic trajectories until ii) a random event occurs, at which point iii) the process jumps to a new state.

Concretely, PDMPs are characterised by three natural properties: i) the driving deterministic motion, as described by a vector field $\phi$, ii) the event rate $\lambda$, which designates the local probability of a jump occurring at a given location, and iii) the jump kernel $Q$, which determines the law of where the process jumps to, given an event at a given location.

Practical simulation of PDMPs then reduces to i) solving the ODE corresponding to the deterministic motion, ii) simulating event times, and iii) simulating the jumps which occur at the jump times. For most PDMPs of practical interest, the process is designed so that i) and iii) are trivially solvable, and so the main challenges arise in solving ii), i.e. simulating event times. For `exact' implementations of PDMPs, this requires bounds on the value of the jump rate $\lambda$ along the path of the process, which typically requires a priori analytical knowledge about the size, shape, or other characteristics of the target measure. This represents a substantial bottleneck for the general-purpose adoption of PDMPs.

While a number of approximate solutions have been proposed \cite{bertazzi2022approximations, bertazzi2301piecewise, corbella2022automatic, andral2024automated, pagani2024nuzz} and have shown reasonable practical performance, in this work we focus on strategies which are \textit{asymptotically exact}, i.e. the invariant measure of the simulated process is precisely the probability measure of interest. We pursue this in the interests of simplicity and composability, noting that it i) sidesteps the thorny technical issue of bias control and estimation, and ii) enables the straightforward nesting of resulting algorithms within other Monte Carlo pipelines, mitigating the possibility of biases and errors propagating along a workflow.

\subsection{Kinetic Methods for MCMC Sampling}

Many `first-generation' methods for MCMC sampling on Euclidean spaces are based around the simple notion of simulating some physical motion of a `simple' particle $x \in \mathbb{R}^d$, stochastic, deterministic, or otherwise, so that the law of $x$ eventually tends to the measure of interest, $\pi$. 

A substantial practical innovation in subsequent generations has been to instead simulate the motion of a `kinetic' particle $\left( x, v \right) \in \mathbb{R}^d \times \mathbb{R}^d$, i.e. a particle with both a position and a velocity. In a suitable limit, one typically specifies the process such that the law of $\left( x, v \right)$ asymptotically tends to $\mu := \pi \otimes \psi$ for some simple law $\psi$, often e.g. a standard Gaussian. For kinetic systems, the dynamics are generally specified such that position $x$ moves in the direction of the velocity $v$. As such, the creative freedom in designing kinetic MCMC algorithms arises in how the velocity changes over time, whether through diffusion \cite{cheng2018underdamped}, through a combination of deterministic motion with discrete refreshment \cite{neal2011mcmc}, or other approaches. In various distinct settings, it has been observed that well-designed `kinetic' processes can have vastly improved convergence properties relative to their `simple' counterparts, both in theory and in practice. This has thus motivated substantial interest in the topic from all sides.

Hand-in-hand with this potential for improved convergence properties, a consistently-observed feature of various kinetic MCMC methods is that tuning of algorithmic hyperparameters requires particular care. In particular, for the sake of ergodicity, many such methods include some notion of `refreshment' of the velocity, wherein the position $x$ is held fixed, and the velocity $v$ is blended (either fully or partially) with a fresh sample from $\psi$. While simple in principle, the practitioner must decide both i) how frequently such refreshments should occur and ii) how aggressively these refreshments should alter the velocity.

\subsection{Hamiltonian Monte Carlo, and the No U-Turn Criterion}

The pre-eminent example of a kinetic MCMC procedure is arguably the Hamiltonian Monte Carlo (HMC) algorithm \cite{neal2011mcmc}. Given a kinetic particle $\left( x, v \right)$, HMC generates proposal moves by numerically simulating Hamiltonian dynamics with respect to the Hamiltonian function $H \left(x, v \right) = - \log \pi \left( x \right) + \frac{1}{2} \left| v \right|^2$, and then accepting or rejecting the proposed move with a suitable Metropolis adjustment step. These moves are interleaved with `full refreshments' of the velocity, i.e. drawing $v \sim \psi = \mathcal{N} \left( 0, I_d \right)$ independently of the current state. See \cite{neal2011mcmc} and \cite{betancourt2017conceptual} for a nice review.

From a tuning perspective, one key decision in HMC is how long of a dynamical trajectory to simulate before making the accept-reject decision. For too short a trajectory, the process loses the benefits of its kinetic character, sacrificing potential improvements in convergence behaviour. On the other hand, the simulation of longer trajectories incurs a higher computational cost and eventually admits diminishing returns with respect to mixing behavior. Some strategies have been proposed (e.g. \cite{wu2018faster, hoffman2021adaptive}) for tuning the trajectory length in an online fashion, but these typically focused on `global' strategies, i.e. using a single trajectory length, uniformly with respect to the starting position.

A substantial innovation was the No U-Turn Criterion \cite{hoffman2014no}, which proposed an automated procedure for adaptively determining a quasi-optimal trajectory length on an instance-by-instance basis. This enables the algorithm to develop long trajectories in challenging parts of the state space, while falling back to shorter trajectories in more regular regions. Roughly speaking, the algorithm proceeds by generating the Hamiltonian trajectory both forwards \textit{and} backwards in time from the initial state, growing it outwards until the trajectory begins to `double back' on itself, committing a `U-Turn' of sorts. Once this happens, the trajectory is fixed, and the next state is drawn from the full collection of states generated along the trajectory, according to a multinomial discrete distribution. The resulting MCMC algorithm is known as the No U-Turn Sampler (NUTS), a version of which is the basis of the MCMC engine of the Stan Probabilistic Programming Languages. By publicly presenting a well-engineered version of this automatically-adapted and self-tuning sampler, Stan has arguably enabled HMC to become the de facto algorithm of choice for general-purpose Bayesian posterior simulation in the current era.
\newpage

\section{Practical Simulation of PDMPs through Metropolis Adjustment}
\label{sec:metropolis}
In this section, we show how to use methods for approximate simulation of PDMPs to construct Markov chains of a given invariant measure. Our construction is grounded in a version of the Metropolis-Hastings procedure, whereby we use the approximate PDMP to propose a move through the state space, and then accept or reject this proposed move with a well-chosen `acceptance probability'. In order to maintain high acceptance rates, the proposal procedure must be designed to follow certain balance principles. In particular, if our approximation proposes to transition from the state $x$ to the state $y$, then the `matching' proposal from $y$ back to $x$ should also have appreciable likelihood. We ensure this by harnessing the \emph{skew-reversibility} property of several PDMPs, a property which arises from comparing the behaviour of the process as run forwards and backwards in time.

Section \ref{subsec:pdmp-reverse} defines what the \textit{time-reversal} process is for a given Markov process, and establishes a skew-reversibility property for the exact process. 
Section \ref{subsec:path-space} rigorously introduces path spaces for PDMP and defines relevant probability measures on said spaces.
Section \ref{subsec:approximation} details the construction of the PDMP approximation.
Section \ref{subsec:acceptance} then shows how to build the Metropolis adjustment for the approximated process.

\subsection{Time-reversal and skew-reversibility}
\label{subsec:pdmp-reverse}

In this section, we fix $T>0$ and consider a continuous-time Markov process $\{Z_t : 0 \leq t \leq T\}$. We first state a general result and then specialize to the case of piecewise deterministic Markov processes (PDMPs). The \emph{time-reversal} of $Z_t$ on $[0,T]$ is defined by
\begin{align*}
    Z^r : w \mapsto \bigl(t \mapsto w(T-t)\bigr), \qquad t \in [0,T].
\end{align*}
We denote by
\begin{align*}
    W := \{\, w : [0,T] \to \mathbb{R}^d \,\}
\end{align*}
the path space, i.e. the set of functions from $[0,T]$ to $\mathbb{R}^d$. For the reversed process we use the same underlying set, but write $W^r$ to avoid ambiguity. The law of $Z$ on $(W,\mathcal{F})$ (where $\mathcal{F}$ is the cylinder $\sigma$-algebra) is denoted by $P$, and the law of the reversed process $Z^r$ on $(W^r,\mathcal{F}^r)$ by $P^r$. We define the path-reversal map

\begin{align}
R : W &\to W^r, \\
   w  &\mapsto \bigl(t \mapsto w(T-t)\bigr).
\end{align}
By construction of the time-reversed process, for every measurable set $A \subset W$ one has
\begin{equation}
    P(A) \;=\; P^r(R(A)).
    \label{eq:backward}
\end{equation}
Using both the forward and the backward processes within a common Markov kernel allows one to define a reversible Markov chain.
\begin{proposition}[skew-reversibility]
    For a fixed $T > 0$, we define the Markov kernel $q$ on $E \times \{-1,1\}$ with the following steps starting from $(z,\gamma)$:
    \begin{enumerate}
        \item if $\gamma = 1$, follow the process forward in time starting from $z$ and return $(z_T,-\gamma)$,
        \item if $\gamma = -1$, follow the process backward in time starting from $z$ and return $(z^r_T,-\gamma)$.
    \end{enumerate}
    If the process admits $\mu$ as an invariant distribution, then $q$ satisfies detailed balance with respect to the product measure of $\mu$ and the uniform measure on $\{-1,1\}$.
    Consequently, $q$ leaves this product measure invariant.
\end{proposition}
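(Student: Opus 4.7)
The plan is to verify detailed balance of $q$ with respect to $\mu \otimes U$ (where $U$ denotes the uniform measure on $\{-1, +1\}$) directly, after which invariance is immediate. Since $q$ deterministically flips the sign coordinate $\gamma$, the detailed balance identity
\[
\mu(dz) \otimes U(d\gamma) \cdot q((z,\gamma), (dz', d\gamma')) = \mu(dz') \otimes U(d\gamma') \cdot q((z',\gamma'), (dz, d\gamma))
\]
reduces, after integrating out the trivial $\gamma$-dependence, to the single statement that for all measurable $A, B \subset E$,
\[
\int_A \mu(dz)\, q((z,+1), B \times \{-1\}) = \int_B \mu(dz')\, q((z',-1), A \times \{+1\}).
\]

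Next I would rewrite both sides as path-space probabilities. The left-hand side is $P(w(0) \in A,\ w(T) \in B)$ under the forward PDMP law $P$ on $W$ initialized from $\mu$, while the right-hand side is $P^r(w(0) \in B,\ w(T) \in A)$ under the reversed PDMP law $P^r$ on $W^r$ initialized from $\mu$. To identify them, I would apply equation~\eqref{eq:backward} to the cylinder set
\[
C_{A,B} = \{\, w \in W : w(0) \in A,\ w(T) \in B \,\},
\]
observing that the path-reversal map satisfies $R(C_{A,B}) = \{\, w \in W^r : w(0) \in B,\ w(T) \in A \,\}$. Then \eqref{eq:backward} yields $P(C_{A,B}) = P^r(R(C_{A,B}))$, which is precisely the required equality.

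The main technical subtlety to check is the initial law for $P^r$: equation~\eqref{eq:backward} requires that both the forward and reversed processes be started from the same stationary law. This in turn rests on the fact that if $\mu$ is invariant for the PDMP, then $\mu$ is also invariant for its time-reversal, which is a standard consequence of the construction recalled in Proposition 1. Once detailed balance is established on $E \times \{-1,+1\}$, invariance of $\mu \otimes U$ under $q$ follows by the routine computation of summing both sides of the detailed balance equation over the starting state.
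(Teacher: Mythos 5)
Your proof is correct and follows essentially the same route as the paper's: both reduce detailed balance to the single cross-term identity for the $\gamma$-flipping transitions (the non-flipping ones having zero mass) and then identify the two sides as $P(C_{A,B})$ and $P^r(R(C_{A,B}))$ via equation~\eqref{eq:backward} applied at stationarity. Your explicit flagging of the stationary initialization of $P^r$ is a point the paper handles implicitly by declaring both path laws to be at equilibrium, but the argument is the same.
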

\begin{proof}
    The probability $P_{\text{prod}}(A'\times B') = \int_{A'} \mu(z) q(B'|(z,\gamma)) \, \mathrm{d} z d\gamma $ defined as starting from $A'$ and ending in $B'$ defines a joint probability on the product space $E \times \{-1,1\} \times E \times \{-1,1\}$. The detailed balance condition can be stated as
    \begin{align*}
        P_{\text{\text{prod}}}(A' \times B' ) = P_{\text{prod}}( B'\times A')
    \end{align*}
    for any subsets $A',B'$ of $E \times \{-1,1\}$.
    
    Let $A$ and $B$ be subsets of $E$. Let $P$ and $P^r$ be the distributions on path spaces associated to the process at equilibrium, that is for any $t$, the law of $Z_t$ is $\mu$.
    First,
    \begin{align*}
        \int_A q(B\times\{\gamma\}|(z,\gamma)) \mu(z) \, \mathrm{d} z = 0 = \int_B q(A\times \{\gamma\}|(z,\gamma)) \mu(z) \, \mathrm{d} z,
    \end{align*}
    as $\gamma$ was not flipped. In other words, $P_{\text{prod}}(A\times \{\gamma\}\times B\times\{\gamma\}) = 0$, and so is the reverse. 
    Second, using \eqref{eq:backward} we get:
    \begin{align*}
        \int_A q(B\times \{-1\} |(z,1)) \mu(z) \, \mathrm{d} z &= P(\{w \in W | w_0 \in A, w_T \in B\}) \\
        &= P^r(\{w^r \in W^r | w^r_0 \in B, w^r_T \in A\}) \\
        &= \int_B q( A\times \{1\}|(z,-1)) \mu(z) \, \mathrm{d} z.
    \end{align*}
    In other word, $ P_{\text{prod}}(A\times\{1\} \times B\times\{-1\}) =  P_{\text{prod}}(B\times\{-1\}\times A\times\{1\})$. A similar reasoning can be made to prove that $ P_{\text{prod}}(A\times\{\gamma\} \times B\times\{-\gamma\}) =  P_{\text{prod}}(B\times\{-\gamma\}\times A\times\{\gamma\})$ for $\gamma \in \{-1,1\}$. Since any $A' = A_1\times\{1\} \cup A_2 \times \{-1\}$, and the same goes for $B'$,   $P_{\text{prod}}(A' \times B' ) = P_{\text{prod}}( B'\times A')$ for any $A',B'$ in $E \times \{-1,1\}$, which concludes the proof.
\end{proof}

This result is general; we now specialize it to PDMPs. We first describe the time-reversal of PDMPs, while subsequent sections provide further details on the associated path spaces.
It has been shown \cite{lopker2013} that the time-reversed process of a PDMP over $[0,T]$ is also a PDMP, whose characteristics can be identified more-or-less explicitly; we write $\lambda^r$ and $Q^r$ for its jump rate and jump kernel.  The general construction of the reverse process, that is $\lambda^r$ and $Q^r$, can be found in \cite{lopker2013}. We give here a special case that applies to both Bouncy Particle Sampler and the Zig-Zag Process.

\begin{proposition}[Practical time-reversal]
    Assuming that $Z_t$ is a PDMP with deterministic flow $\phi_t$, jump rate $\lambda$ and jump kernel $Q(z,\cdot) = \delta_{F(z)}$ where $F$ is an injection leaving the Lebesgue measure invariant, and that $\mu$ is an invariant distribution with a density, the time-reversal is the following PDMP:
    \begin{enumerate}
        \item Reverse deterministic flow, i.e. $\phi_{{rev}, t} = \phi_t^{-1}$
        \item Jump rate density $\lambda_{rev}(z) = \lambda(F^{-1}(z)) \frac{\mu(F^{-1}(z))}{\mu(z)} |J_{F^{-1}}(z)|$
        \item Jump Kernel $Q_{rev}(z,\cdot) = \delta_{F^{-1}(z)}$.
    \end{enumerate}
\end{proposition}

\subsection{Path Space and Path Measure for PDMPs}
\label{subsec:path-space}

The preceding discussion suggests a conceptual discrete-time algorithm for sampling from $\mu$: at each step, sample $\gamma$ uniformly from $\left\{ \pm 1 \right\}$ and then sample $\left( z^\prime, \gamma^\prime \right) \sim q \left( \left( z, \gamma \right), \cdot \right)$. This is readily seen to leave the product of $\mu$ with the uniform measure on $\{-1,1\}$ invariant, and is hence nominally fit for purpose as a sampler for $\mu$ (ignoring questions of efficiency). However, sampling from $q$ is often challenging, for reasons discussed in Section \ref{subsec:concepts-PDMP}. Nevertheless, if we are able to sample from some well-behaved approximation to $q$, we expect that a suitable Metropolis-adjustment procedure can deliver an algorithm which is both implementable and theoretically valid. In this Subsection, we elaborate on the specific notion of ``well-behavedness'' required for this correction: namely, the approximate proposal must admit a tractable density with respect to a suitable dominating measure. To establish this, it is crucial to first precisely define the path space of the process and the canonical reference measure with which it is equipped.

The path of a piecewise deterministic Markov process (PDMP) $(Z_t)_{t \in [0,T]}$ is fully determined by its \textit{skeleton}, which consists of the initial state $Z_0$, the event times $\tau_i$, and the state of the process immediately after each jump $Z_{\tau_i}$. Therefore, for a PDMP on state space $E$ with canonical measure $\lambda_E$, its path space can be expressed as:
\[
    E \times \bigcup_{k \in \mathbb{N}} \left( ]0,T[ \times E \right)^k,
\]
where $k$ is the number of jumps in a given path.
This space is endowed with the canonical product measure. However, if the jump kernel $Q$ does not possess a density with respect to the canonical measure of $E$, then the probability on $W$ induced by the PDMP does not have a density with respect to the canonical product measure. 

\begin{assumption}[Jump kernel parametrisation]
    \label{ass:jump-param}
    There exists a space $V$ with a canonical measure $\lambda_V$ such that for each $x \in E$, $Q( \cdot | x)$ is the pushforward of a probability density $\nu_x$ on $V$ by an injection $\xi_x : V \mapsto E$. In other words, drawing a sample from  $Q( \cdot | x)$ is equivalent to drawing a sample on $V$ from $\nu_x$ then applying $\xi_x$.
\end{assumption}
This assumption is true for the usual PDMP samplers such as BPS and the Zig-Zag Process. Under this assumption, the path space can be defined as
\begin{equation}
    W = E \times \bigcup_{k \in \mathbb{N}} \left( ]0,T[ \times V \right)^k.
\end{equation}
The canonical measure $\lambda_W$ on $W$ is defined as the product measure.

\begin{proposition}
\label{prop:path-density} 
\begin{enumerate}
    \item The distribution $P$ on the path space $W$, corresponding to the PDMP $(Z_t)_{t \in [0,T]}$, is absolutely continuous with respect to the canonical measure $\lambda_W$ on $W$.
    \item In particular, this implies that $P$ admits a density $p$ with respect to $\lambda_W$. Furthermore, for a given path, the density $ p((z_0,(\tau_1,v_1),...,(\tau_k,v_k)))$ is given by
\[
    \mu(z_0) e^{-\int_0^{T-\tau_k} \lambda(\Phi_t(z_k)) \, \mathrm{d} t} \prod_{i = 1}^k \lambda(\Phi_{\tau_{i} - \tau_{i-1}}(z_{i-1})) e^{-\int_0^{\tau_i - \tau_{i-1}} \lambda(\Phi_{t}(z_{i-1})) \, \mathrm{d} t} \nu_{\Phi_{\tau_{i} - \tau_{i-1}}(z_{i-1})}(v_i),
\]
    where $z_i = \xi_{\Phi_{\tau_{i} - \tau_{i-1}}(z_{i-1})}(v_i)$, that is the state of the PDMP after the $i$-th jump.
    \item The probability of a path $z$ conditionally on its starting point $z_0$ has a density $p(z | z_0)$ on $\{z_0\} \times \bigcup_{k \in \mathbb{N}} \left( ]0,T[ \times V \right)^k$.
\end{enumerate}

\end{proposition}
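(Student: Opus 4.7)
The plan is to establish the density formula by a direct inductive construction that mirrors the generative definition of a PDMP, and then to deduce statements (1) and (3) as immediate corollaries. At stationarity we have $Z_0 \sim \mu$, so the initial position contributes a density $\mu(z_0)$ with respect to $\lambda_E$. I would then argue component-wise over the disjoint-union decomposition of $W$ indexed by the number of jumps $k$: on the fiber $E \times (]0,T[ \times V)^k$, it suffices to compute the joint density of the random vector $(Z_0, \tau_1, v_1, \dots, \tau_k, v_k, N_T = k)$, where $N_T$ is the number of events in $[0,T]$. Since the canonical measure $\lambda_W$ is the sum over $k$ of the corresponding product measures, absolute continuity on the full space reduces to absolute continuity on each fiber.

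To build the density on the fiber of $k$ jumps, I would proceed inductively from $z_0$ using the Markov property. Given the position $z_{i-1}$ just after the $(i-1)$-st jump, the inter-arrival time $\tau_i - \tau_{i-1}$ is the first event time of an inhomogeneous Poisson process with rate $t \mapsto \lambda(\Phi_t(z_{i-1}))$, giving the standard density
\[
\lambda\bigl(\Phi_{\tau_i - \tau_{i-1}}(z_{i-1})\bigr)\exp\!\Bigl(-\int_0^{\tau_i-\tau_{i-1}} \lambda(\Phi_t(z_{i-1}))\,dt\Bigr)
\]
on $]0, T - \tau_{i-1}[$. Conditional on the event, Assumption \ref{ass:jump-param} states that the post-jump state $z_i$ arises as $\xi_{\Phi_{\tau_i-\tau_{i-1}}(z_{i-1})}(v_i)$ with $v_i \sim \nu_{\Phi_{\tau_i-\tau_{i-1}}(z_{i-1})}$, so the conditional density of $v_i$ with respect to $\lambda_V$ is exactly the $\nu$ factor appearing in the claim. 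Finally, conditioning on exactly $k$ jumps having occurred in $[0,T]$ multiplies in the survival probability $\exp(-\int_0^{T-\tau_k} \lambda(\Phi_t(z_k))\,dt)$, coming from the requirement that no further event fires between $\tau_k$ and $T$. Multiplying these factors and using the independence structure induced by the strong Markov property at each $\tau_i$ yields the claimed product formula. Statement (1) is then a direct consequence of having produced an explicit density, and statement (3) follows because $p(z \mid z_0) = p(z)/\mu(z_0)$ is well-defined $\mu$-a.e. and evidently integrable against $\lambda_W$ restricted to the fiber over $\{z_0\}$.

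The main obstacle is not the computation of any individual factor, which is standard for inhomogeneous Poisson processes, but rather the careful handling of the disjoint-union structure of $W$: one must confirm that the decomposition of $P$ into its restrictions to the $k$-jump fibers is measurable, that the boundary events (a jump occurring exactly at time $T$, or two jumps coinciding) form $\lambda_W$-null sets and are assigned zero probability by $P$, and that Assumption \ref{ass:jump-param} together with the injectivity of $\xi_x$ ensures the post-jump state is recoverable from $v_i$ so that no hidden change-of-variables Jacobian intrudes. Once these measure-theoretic bookkeeping points are dispatched, the formula assembles routinely from the iterated conditioning.
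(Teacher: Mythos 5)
Your proposal is correct and follows essentially the same route as the paper's own proof: an explicit construction of the density from the skeleton, multiplying the density of $z_0$ under $\mu$, the first-event-time densities of the inhomogeneous Poisson processes between jumps, the $\nu$-densities of the jump parameters from Assumption \ref{ass:jump-param}, and the final survival factor. If anything, your treatment is more careful than the paper's (which omits explicit mention of the survival term and the fiber-by-fiber decomposition of $W$), but the underlying argument is the same.
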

\begin{proof}
This can be seen by explicitly constructing the probability distribution on $W$ from the skeleton of the process.
Let us write a typical element of $W$ as $(z_0, \tau_1, v_1, \dots, \tau_k, v_k)$, where:
\begin{itemize}
    \item $z_0 \sim \mu$, where $\mu$ is a distribution on $E$ with a density with respect to the base measure $\lambda_E$. Hence, the initial state $z_0$ has a density with respect to $\lambda_E$,
    \item the event times $\tau_i$ follow (conditionally on $\tau_{i-1}$ and $z_{i-1}$) the first event time of an inhomogeneous Poisson process, with rate $\lambda(\Phi_{t-\tau_{i-1}}(z_{i-1}))$, and
    \item the post-jump state $z_i$ is determined by $v_i$, which has density $\nu_{\Phi_{\tau_{i} - \tau_{i-1}}(z_{i-1})}$ on $V$.
\end{itemize}
Thus, the joint distribution of $(z_0, \tau_1, v_1, \dots, \tau_k, v_k)$ is given by a product of these densities: the density of $z_0$ with respect to $\lambda_E$, the conditional densities of the jump times $\tau_i$, and the densities $\nu$ with respect to $\lambda_V$. This shows that the probability distribution $P$ on $W$ has a density $p$ with respect to the canonical measure $\lambda_W$ on $W$.
\end{proof}

\subsection{Approximation of a PDMP}
\label{subsec:approximation}

Exactly simulating the path of a PDMP, while possible in various cases, generally requires computing target-dependent upper bounds for $\lambda$. We build here a numerical approximation $\tilde{Z_t}$ to circumvent this problem. In particular, we consider approximations which i) can be simulated easily, and ii) whose density with respect to the reference measure on path space is readily available. We then simulate the corresponding process for a fixed time $T$. By construction, we can compare this proposed path to its time-reversal (whose density is similarly available), and then use this as the basis for a Metropolis-adjustment of the proposal $\widetilde{Z_T}$.%

This process is very similar to HMC in general flavour, but differs in the nature of approximation which is made. Indeed, the discretisation introduced by HMC effectively prevents any kind of local adaptation of the step-size, as for many natural strategies, the adaptation has an intrinsically `forward-in-time' character, which renders the forward process singular with respect to to its time reversal (though see \cite{nishimura2016variable} for a counter-proposal along these lines). Here, we aim for a locally-adaptive approximation to the event rate function. To that end, the process $Z_t$ is approximated by \textit{another continuous-time} process $\tilde{Z_t}$. To do that, we build a \textit{continuous-time} approximation $\tilde{\lambda}(t)$ of $\lambda(t) = \lambda (\Phi_t(z_0), \gamma)$. Starting from $(z_0,\gamma)$ (where $\gamma \in \left\{ \pm 1 \right\}$ is the time direction), the approximated process is therefore simulated as follows:
\begin{itemize}
    \item While $\tau_i < T$:
    \begin{enumerate}
    \item at event time $\tau_i$, compute the approximation $\tilde{\lambda}(t)$ of $\lambda(\Phi_{\gamma t}(\tilde{z}_{\tau_i}),\gamma)$,
    \item simulate the next event time $\tau_{i+1} - \tau_{i}$ \textit{exactly} from the first event time of an inhomogeneous Poisson process with rate $\tilde\lambda(t)$,
    \item simulate $\tilde z_{\tau_{i+1}} \sim Q(\cdot | \Phi_{\gamma(\tau_{i+1} - \tau_i)}(\tilde z_{\tau_i}),\gamma)$,
\end{enumerate}
\end{itemize}
until time $T$ is reached, and then return $(\tilde z_T,\gamma)$. It should be noted that when viewed as a continuous-time stochastic process, the approximated process no longer has the Markov property, as the implied event rate depends on where the last jump occurred. The skeleton of the approximated process can still be a Markov process, but it would require extra assumptions on the way the approximation $\tilde{\lambda}$ is built. In any case, from the point of view of the Metropolis-adjustment, the Markov property of the proposal is not important; the interior of the trajectory is essentially just an auxiliary instrument for simulating the terminal point $\tilde z_T$.

The path space is the same for the exact process and the approximation, since the same jump kernel $Q$ is used. As such, we can write an analogue of Proposition \ref{prop:path-density} for the approximate process.
\begin{proposition}
\label{prop:path-density-approx}
\begin{enumerate}
    \item The distribution $\tilde{P}$ on the path space $W$, corresponding to the approximate PDMP $(\tilde{Z}_t)_{t \in [0,T]}$, is absolutely continuous with respect to the canonical measure $\lambda_W$ on $W$.
    \item In particular, $\tilde{P}$ admits a density $\tilde{p}$ with respect to $\lambda_W$, and for a given path, $ \tilde{p}((z_0,(\tau_1,v_1),...,(\tau_k,v_k)))$ can be written as:
\[
    \mu(z_0) e^{-\int_0^{T-\tau_k} \tilde{\lambda}(\Phi_t(z_k)) \, \mathrm{d} t} \prod_{i = 1}^k \tilde{\lambda}(\Phi_{\tau_{i} - \tau_{i-1}}(z_{i-1})) e^{-\int_0^{\tau_i - \tau_{i-1}} \tilde{\lambda}(\Phi_{t}(z_{i-1})) \, \mathrm{d} t} \nu_{\Phi_{\tau_{i} - \tau_{i-1}}(z_{i-1})}(v_i),
\]
    where $z_i = \xi_{\Phi_{\tau_{i} - \tau_{i-1}}(z_{i-1})}(v_i)$, that is the state of the PDMP after the $i$-th jump.
    \item The probability of a path $z$ conditionally on its starting point $z_0$ has a density $\tilde{p}(z | z_0)$ on $\{z_0\} \times \bigcup_{k \in \mathbb{N}} \left( ]0,T[ \times V \right)^k$.
\end{enumerate}

\end{proposition}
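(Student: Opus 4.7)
My plan is to mirror the proof of Proposition \ref{prop:path-density} step by step, substituting $\tilde{\lambda}$ for $\lambda$ wherever the rate of the Poisson event clock appears, while verifying that the construction remains valid even though $\tilde{Z}_t$ is no longer a Markov process.

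First, I would observe that because the approximate process uses the same jump kernel $Q$, the same initial distribution $\mu$, and the same deterministic flow $\Phi$ between events, the path space $W = E \times \bigcup_{k \in \mathbb{N}}\left(]0,T[\times V\right)^{k}$ and its canonical product measure $\lambda_W$ are unchanged. In particular, Assumption \ref{ass:jump-param} allows us to encode the post-jump state $\tilde{z}_i$ via the auxiliary variable $v_i \in V$ with conditional density $\nu_{\Phi_{\tau_i - \tau_{i-1}}(z_{i-1})}$ on $V$, exactly as in the exact case.

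Next, I would construct the joint law of the skeleton $(z_0, \tau_1, v_1, \ldots, \tau_k, v_k)$ by iterated conditioning. Conditional on the skeleton up to and including the $i$-th jump, the next holding time $\tau_{i+1} - \tau_i$ is, by the construction recalled in Section \ref{subsec:approximation}, the first event time of an inhomogeneous Poisson process with rate $t \mapsto \tilde{\lambda}(\Phi_t(z_i))$, truncated at $T$. The standard formula for the first event time of such a process yields the conditional density
\[
\tilde{\lambda}(\Phi_{\tau_{i+1}-\tau_i}(z_i))\,\exp\!\left(-\int_0^{\tau_{i+1}-\tau_i}\tilde{\lambda}(\Phi_t(z_i))\,dt\right)
\]
with respect to Lebesgue measure on $]0,T-\tau_i[$, together with a probability mass of $\exp\!\left(-\int_0^{T-\tau_k}\tilde{\lambda}(\Phi_t(z_k))\,dt\right)$ attached to the event that no further jump occurs before $T$, which accounts for the survival factor in the stated formula. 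Multiplying these conditional densities by $\nu_{\Phi_{\tau_{i+1}-\tau_i}(z_i)}(v_{i+1})$ for each jump variable and by $\mu(z_0)$ for the initial state yields the product form of $\tilde{p}$ claimed in point 2; summing the resulting stratum-wise densities over $k$ gives absolute continuity of $\tilde{P}$ with respect to $\lambda_W$ on each component, hence point 1.

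The step I expect to require the most care is justifying that the loss of the Markov property for $\tilde{Z}_t$ does not obstruct this construction. The key point is that, by definition, the rate $\tilde{\lambda}$ used on the segment $[\tau_i, \tau_{i+1}]$ is a deterministic (measurable) function of the skeleton observed up to time $\tau_i$, so conditional on that skeleton the holding time genuinely does have the inhomogeneous Poisson form above. Hence the iterated-conditioning argument is valid at the level of the skeleton even though the continuous-time trajectory is not Markov, and the expression for $\tilde{p}$ is unambiguously defined. Finally, point 3 follows by disintegrating $\tilde{p}$ with respect to the marginal density $\mu$ of $z_0$, exactly as in the exact case.
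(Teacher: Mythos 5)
Your proof is correct and follows essentially the same route as the paper, which simply invokes the construction from Proposition \ref{prop:path-density} with $\tilde{\lambda}$ in place of $\lambda$. Your additional remark that the rate on each inter-event segment is a deterministic function of the skeleton up to the last event (so the iterated conditioning survives the loss of the continuous-time Markov property) is a worthwhile clarification that the paper leaves implicit.
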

\begin{proof}
    Similar to Proposition \ref{prop:path-density}.
\end{proof}

The previous proposition, which will be a core part of the Metropolis correction, doesn't require any specific assumption on $\tilde \lambda$. However, we add two assumptions on $\tilde\lambda$ to ensure it can be properly simulated and admit a workable density: 

\begin{assumption}
    \label{ass:approx}
    \begin{enumerate}
        \item At any time $t \in [\tau_i, \tau_{i+1})$, the approximate rate $\tilde{\lambda}(t)$ is a measurable function of the initial state $(z_0, \gamma)$, the sequence of past jump times and locations $(\tau_1, z_1, \dots, \tau_i, z_i)$, and the elapsed time. In other words, it does not depend on the state of the process prior to $(z_0, \gamma)$,
        \item The first event time of an inhomogeneous Poisson process with rate $\tilde \lambda$ can be sampled exactly,
        \item The integral $\int_0^t \tilde \lambda(t) \, \mathrm{d} t$ can be computed exactly.
    \end{enumerate}    
\end{assumption}
The first condition ensures that the kernel that associates $(\tilde z_0,\gamma)$ to $(\tilde z_T,\gamma)$ is a Markov kernel, the second condition ensures that we can simulate the process in practice, and the third condition allows for the Metropolis adjustment.

\paragraph{Piecewise-Constant Rate Approximation}
The simplest approximation is for a fixed step size $h$:
\[
    \tilde\lambda^{(0)}_h(t) = \lambda(\left\lfloor t\right\rfloor_h),
\]
with $\left\lfloor t\right\rfloor_h = h \cdot \left\lfloor \frac{t}{h}\right\rfloor \in \left( t - h, t \right]$.

\paragraph{Piecewise-Linear Rate Approximation}
Another natural approximation is to take
\[
    \tilde\lambda^{(1)}_h(t) = \lambda(\left\lfloor t\right\rfloor_h) + \frac{t -\left\lfloor t\right\rfloor_h}{h} \cdot \left( \lambda(\left\lfloor t\right\rfloor_h + h) - \lambda(\left\lfloor t\right\rfloor_h)\right),
\]
which is piecewise-linear in time.
Observe that while this approximation might appear to `cheat' by looking ahead to time $\left\lfloor t\right\rfloor_h + h$, it satisfies assumption \ref{ass:approx}. The approximating rate itself is a simple, deterministic function of the initial value of $z$, and so does not cause any troubles in terms of yielding a tractable density for the proposal. 

\begin{remark}
    This approximation is exact if $\lambda$ is linear. This covers the case of Gaussian targets for both the BPS and the Zig-Zag Process.
\end{remark}

\subsection{Local adaptation of the step size}
Ideally, one would like the algorithm the algorithm to be \textit{scale-invariant}, that is, that applying the same algorithm to the scaled targets $\left\{ \sigma^{-d} \cdot \pi \left( \sigma^{-1} \cdot x \right) \right\}_{\sigma > 0}$ should yield the same effective result, irrespective of the scaling factor $\sigma$. One sees readily that any fixed-$h$ implementation of this procedure will fail to have this property; an idealised invariant implementation would take $h \propto \sigma$. Bearing this in mind, we seek implementation strategies which can automatically emulate this self-scaling behaviour.

\paragraph{Piecewise-Constant Rate Approximation}
Here, we present a rule for the order-0 approximation that satisfies Assumption \ref{ass:approx}, which implies it will be compatible with the Metropolis correction. 
This rule corresponds to the standard local adaptation of the step size for the Euler integrator. First, recall that simulating an event time for the process amounts to solving for some $u$
\[
    I_t = u,
\]
with $I_t = \int_0^t \lambda(s) ds$. Using a Taylor series expansion at $t = 0$, argue that
\[
    I_t = \lambda(0) t + K t^2 + \mathcal{O}(h_{guess}^3).
\]
For $t < h$, the first term corresponds to the piecewise constant approximation, while the second term is an error term. Assuming that $K$ is known, to bound the error on $I_t$ by a given tolerance, we therefore chose 
\[
    h = \sqrt{\frac{\text{tol}}{K}}.
\]
To find an estimate of $K$ for a given initial step size $h_{guess}$, we build two approximations of $I_{h_{guess}}$ using left Riemann sums: one step of size $h_{guess}$ ($A_1$) and two steps of size $h_{guess}/2$ ($A_2$). 
Comparing these to the Taylor expansion of the integral gives:
\begin{align*}
    A_1 &= I_{h_{guess}} - K h_{guess}^2 + \mathcal{O}(h_{guess}^3) \\
    A_2 &= I_{h_{guess}} - K \frac{h_{guess}^2}{2} + \mathcal{O}(h_{guess}^3)
\end{align*}
Subtracting $A_1$ from $A_2$ yields $A_2 - A_1 = K \frac{h_{guess}^2}{2} + \mathcal{O}(h_{guess}^3)$. We therefore estimate $K$ as $\hat{K} = 2 \frac{A_2 - A_1}{h_{guess}^2}$, which satisfies:
\begin{equation}
    \label{eq:K_approx}
    \hat{K} = K + \mathcal{O}(h_{guess}).
\end{equation}
The final adaptation rule is given by
\[
    h = h_{guess} \sqrt{\frac{\text{tol}}{2|\tau|}},
\]
where the difference $\tau = A_2 - A_1$ is computed as:
\begin{align*}
    \tau &= \lambda(0) \frac{h_{guess}}{2} + \lambda\left(\frac{h_{guess}}{2}\right) \frac{h_{guess}}{2} - \lambda(0) h_{guess} \\
         &= \frac{h_{guess}}{2} \left[ \lambda\left(\frac{h_{guess}}{2}\right) - \lambda(0) \right].
\end{align*}

\begin{proposition}[Scale-Invariance]

    Let $\lambda(t)$ be the twice continuously differentiable rate function for BPS sampler for an unscaled target $\pi$. Let $h_{guess}$ be the initial step size, and $h_\sigma$ the adapted step size computed for the scaled target $\pi_\sigma(x) = \pi(\sigma x)$ at $(\frac{1}{\sigma} x,v)$. As $h_{guess} \to 0^+$, it holds that
    \[
        h_\sigma = \frac{h_{guess}}{\sigma} \cdot  \left( 1 + \mathcal{O}(h_{guess}) \right),
    \]
    where $h$ is the adapted step size computed for the unscaled target.
\end{proposition}

\begin{proof}
    Starting from $(\frac{1}{\sigma}x,v)$, for BPS sampler the rate for the scaled target is $\lambda_\sigma(t) = \sigma \lambda(t\sigma)$. 
    Following the previous notations, let $K_\sigma$ be the second order coefficient of the Taylor expansion of $I_{t,\sigma} = \int_0^t \lambda_\sigma(s)ds$. Since $K_\sigma = \frac{1}{2}\lambda'_\sigma(0)$, then
    \[
        K_\sigma = \frac{1}{2} \sigma^2 \lambda'(0) = \sigma^2 K,
    \]
    where $K$ is the coefficient for the unscaled target $\pi$.
    Applying Equation \ref{eq:K_approx} to the scaled process, the algorithm computes an empirical estimate $\hat{K}_\sigma$ satisfying:
    \[
        \hat{K}_\sigma = K_\sigma + \mathcal{O}(h_{guess}) = \sigma^2 K + \mathcal{O}(h_{guess}).
    \]
    Finally, the unscaled empirical step size satisfies $h = \sqrt{\frac{\text{tol}}{|\hat{K}|}} = \sqrt{\frac{\text{tol}}{|K|}}(1 + \mathcal{O}(h_{guess}))$. Substituting $\hat{K}_\sigma$ into the scaled step size formula $h_\sigma = \sqrt{\frac{\text{tol}}{|\hat{K}_\sigma|}}$ yields:
    \[
        h_\sigma = \sqrt{\frac{\text{tol}}{|\sigma^2 K + \mathcal{O}(h_{guess})|}} = \frac{1}{\sigma} \sqrt{\frac{\text{tol}}{|K|}} \left( 1 + \mathcal{O}(h_{guess}) \right) = \frac{1}{\sigma} h \left( 1 + \mathcal{O}(h_{guess}) \right).
    \]
\end{proof}

\begin{remark}
    The naive adaptation which tries to find a $h$ such that $|\lambda(t) - \lambda(t+h)| < \text{tol}$ is not scale-invariant. 
\end{remark}

\paragraph{Piecewise-Linear Rate Approximation}

A similar adaptation rule can be derived for the order-1 approximation using a higher-order Taylor expansion. For brevity, we state only the final rule here and defer the complete derivation to Appendix \ref{appendix:order-1-adaptation}. Given an initial step size guess, $h_{guess}$, the updated step size is:

\begin{align*}
h &= h_{guess}\left(\frac{3\times \text{tol}}{|h_{guess}f(h) -2h_{guess}f(h_{guess}/2)+hf(0)|}\right)^{1/3}.
\end{align*}

\begin{proposition}[Scale invariance - order 1]
    \label{prop:scale-invariance-order-1}
    Let $\lambda(t)$ be the twice continuously differentiable rate function for BPS sampler for an unscaled target $\pi$. Let $h_{guess}$ be the initial step size, and $h_\sigma$ the adapted step size for the order 1 approximation computed for the scaled target $\pi_\sigma(x) = \pi(\sigma x)$ at $(\frac{1}{\sigma} x,v)$. As $h_{guess} \to 0^+$,
    \[
        h_\sigma = \frac{h_{guess}}{\sigma} \cdot \left( 1 + \mathcal{O}(h_{guess}) \right),
    \]
    where $h$ is the adapted step size for the order 1 approximation computed for the unscaled target.
\end{proposition}
\begin{proof}
    See Appendix \ref{appendix:order-1-adaptation}.
\end{proof}

\subsection{Metropolis-Adjustment of PDMP Discretisations}
\label{subsec:acceptance}

We assume a similar assumption to Assumption \ref{ass:jump-param} for the backward process.
\begin{assumption}[Reverse jump kernel parametrisation]
    There exists a space $V^r$ with a canonical measure $\lambda_{V^r}$ such that for each $x \in E$, $Q^r( \cdot | x)$ is the pushforward of a probability density $\nu^r_x$ on $V$ by an injection $\xi^r_x : V^r \mapsto E$. In other words, drawing a sample from  $Q^r( \cdot | x)$ is equivalent to drawing a sample on $V^r$ from $\nu^r_x$ then applying $\xi^r_x$.
\end{assumption}
Therefore, the path space for the backward process is
\begin{equation}
    W^r = E \times \bigcup_{k \in \mathbb{N}} \left( ]0,T[ \times V^r \right)^k.
\end{equation}
The approximation can be then built both forward and backward in time, with the backward approximation having probability $\tilde{P^r}$ on $W^r$ with density $\tilde{p^r}$.
To compute the Metropolis correction, the change of volume induced by $R$ must be taken into account.
\begin{assumption}[Volume change induced by R]
    We assume there exists a function $\psi: W^r \mapsto \mathbb{R}^+$  such that for any $f: W^r \mapsto \mathbb{R}$:
    \[
        \int_W f(R(w)) \lambda_W(\, \mathrm{d} w) = \int_{W^r} f(w^r) \psi(w^r) \lambda_{W^r}(\, \mathrm{d} w^r).
    \]
\end{assumption}
We may write \eqref{eq:backward} as 
\begin{equation}
    p(w) = p^r(R(w)) \frac{1}{\psi(R(w))}.
    \label{eq:backward-density}
\end{equation}
We identify $R^{-1}: W^r \mapsto W$ to $R$ for notation convenience. Hence $R$ is seen as a function mapping $W \cup W^r$ to itself, and the associated volume change is therefore $1/\psi(R(w))$ for $w \in W$. 

\begin{proposition}[Volume change for BPS and ZZP]
    For BPS and the Zig-Zag Process, there exists a PDMP representation (i.e. $V$ and $V^r$) such that $\psi$ exists and:
    \[
        \psi(z) = 1 \quad \forall z.
    \]
\end{proposition}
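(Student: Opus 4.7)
The plan is to choose parametrisations of the jump kernels that realise $Q$ and $Q^r$ through volume-preserving injections, and then to decompose $R$ into pieces each having unit absolute Jacobian. For BPS, the jump $F(x,v) = (x, v - 2\langle v, \nabla U(x)\rangle/|\nabla U(x)|^2 \cdot \nabla U(x))$ is deterministic and involutive, so I would take both $V$ and $V^r$ to be a singleton, with $\xi$ simply returning $F(x,v)$ (and likewise for $\xi^r$). For Zig-Zag, I would take $V = V^r = \{1,\dots,d\}$ with counting measure, $\nu_{(x,v)}(i)$ equal to the conditional probability that coordinate $i$ flips at a jump, and $\xi_{(x,v)}(i) = (x, v - 2 v_i e_i)$; since flipping coordinate $i$ is an involution, the same parametrisation works for the reverse kernel. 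With these choices, $W$ and $W^r$ coincide as measure spaces, and $R$ can be written explicitly as
\[
(z_0, \tau_1, v_1, \ldots, \tau_k, v_k) \mapsto (z_T, T - \tau_k, v_k, \ldots, T - \tau_1, v_1),
\]
where $z_T$ denotes the forward endpoint (with the velocity-sign convention induced by the time-reversal definition).

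Next I would factor $R$ into three steps and show that each has unit absolute Jacobian. First, the reshuffling of the times $\tau_i \mapsto T - \tau_{k - i + 1}$ is an affine involution on $]0,T[^k$ with Jacobian $\pm 1$. Second, the jump-label coordinates $v_i$ are merely permuted (and unchanged within each block, since BPS has a trivial label while the reverse Zig-Zag flip involves the same coordinate), again Jacobian $\pm 1$. Third, and most substantively, the ``base point'' is reparametrised from $z_0$ to $z_T$: holding the jump times and labels fixed, $z_0 \mapsto z_T$ is a composition of free-transport steps $\phi_{\tau_{i+1} - \tau_i}$ and jump maps $z \mapsto \xi_z(v_i)$. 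Each transport is a shear on $E = \mathbb{R}^d \times \mathbb{R}^d$ with determinant $1$, and each jump map has block-lower-triangular Jacobian in $(x,v)$ coordinates, with identity on the position block and an orthogonal reflection (BPS) or sign flip (ZZ) on the velocity block, and hence absolute determinant $1$. Composing gives that the base-point swap is volume-preserving.

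Multiplying the three contributions yields $|J_R| \equiv 1$ on $W$, which by the change-of-variables formula is equivalent to $\psi \equiv 1$. The main technical subtlety lies in the third step: the reflection axis $\nabla U(x)$ depends on the current position, so one must verify carefully that this dependence contributes only to the off-diagonal (lower-triangular) entries of the Jacobian in $(x,v)$ coordinates and therefore does not alter the determinant. Once this block-triangularity is in hand, the factorisation collapses and the proposition follows.
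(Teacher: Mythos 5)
Your proposal is correct and follows essentially the same route as the paper: the same parametrisations ($V$ a singleton for BPS, $V = \{1,\dots,d\}$ for Zig-Zag, with the reverse kernels given by the same involutive maps), the same explicit form of $R$, and the same decomposition into a measure-preserving reshuffle of the times and labels plus the base-point swap $z_0 \mapsto z_T$, which is volume-preserving as a composition of the flow and the jump maps. Your added detail on the block-triangular Jacobian of the position-dependent reflection is a welcome elaboration of what the paper states more briefly.
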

\begin{proof}
See Section \ref{subsec:ZZ-BPS}.
\end{proof}
The Metropolis-adjusted PDMP is defined in Algorithm \ref{algo:mh-approx} as a Markov kernel on $E \times \{-1,1\}$.
\begin{algorithm}[H]
    \caption{Metropolis adjusted PDMP Markov kernel}
    \label{algo:mh-approx}
    \begin{algorithmic}[1]
        \State \textbf{Input:} state $\tilde z_0,\gamma$
        \State Sample a path: 
        \begin{enumerate}
            \item forward approximated path $(\tilde w_t)_{t\in [0,T]}$ with $\tilde{w}_0 = \tilde{z_0}$ if $\gamma = 1$,
            \item or a backward approximated path $(\tilde w^r_t)_{t\in [0,T]}$ with $\tilde{w}^r_0 = \tilde{z_0}$ if $\gamma = -1$
        \end{enumerate}
    \State Compute the acceptance probability 
    \begin{enumerate}
        \item $\alpha(\tilde w) = 1 \wedge\frac{\mu(\tilde w_T) \tilde{p}^r(R(\tilde w) | \tilde{w_T}) }{\mu(\tilde{z}_0) \tilde{p}(\tilde w|\tilde{z}_0) \psi(R(\tilde{w}))}$ if $\gamma = 1$
        \item $\alpha(\tilde w^r) = 1 \wedge\frac{\mu(\tilde w^r_T) \tilde{p}(R(\tilde w^r) | \tilde{w^r_T}) \psi(\tilde{w}^r)}{\mu(\tilde{z}_0) \tilde{p}^r(\tilde w^r|\tilde{z}_0) }$ if $\gamma = -1$
    \end{enumerate}
    \State With probability $\alpha$, return the point $\tilde w_T,-\gamma$ if $\gamma = 1$ or $\tilde w^r_T,-\gamma$ if $\gamma = -1$
    \State Otherwise return $\tilde z_0,\gamma$  
    \end{algorithmic}
\end{algorithm}

\begin{theorem}[Metropolis-corrected PDMP]

\begin{enumerate}
    \item The Metropolis-adjusted PDMP Markov kernel verifies detailed balance for the product between $\mu$ and a uniform probability on the space $E \times \{-1,1\}$.
    \item If the `approximated process' used is the exact process, then the acceptance probability is 1.
\end{enumerate}
\end{theorem}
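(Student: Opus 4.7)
The plan is to treat this as a standard Metropolis--Hastings detailed balance argument, lifted to the augmented path space. The target on $E \times \{-1,1\}$ is $\mu \otimes \text{Unif}\{\pm 1\}$. The subtlety is that the proposal kernel is defined by integrating out an auxiliary path, and the involution $R$ between forward and reverse paths carries a Jacobian factor $\psi$. I would organise the argument in two parts as stated.

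\paragraph{Part 1: detailed balance.}

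I would start by checking that the kernel only connects states of opposite sign in $\gamma$, so detailed balance decomposes into two identities: one between $(\cdot, +1)$ and $(\cdot, -1)$ pairs (rejection is trivially balanced, as the chain stays put). For the accepted transitions, I would work on the augmented space $E \times \{+1\} \times W$, the idea being that the full transition (sample a path, then accept/reject) satisfies detailed balance on this larger space, which then descends to the marginal on $E \times \{-1,1\}$ by integrating out the path. Concretely, for test sets $A, B \subset E$, I would write
\begin{align*}
\int_{A} \mu(dz_0) \int_{\{\tilde w : \tilde w_T \in B\}} \tilde p(\tilde w \mid z_0)\, \alpha(\tilde w)\, \lambda_W(d\tilde w)
\end{align*}
and compare it to the analogous integral starting from $(B,-1)$ and proposing a backward path. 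The key step is the change of variables $\tilde w^r = R(\tilde w)$, which by Assumption on the volume change contributes the factor $\psi(R(\tilde w))$ when rewriting $\lambda_W(d\tilde w)$ as $\psi(\tilde w^r)\lambda_{W^r}(d\tilde w^r)$. After this substitution, the product $\mu(z_0)\tilde p(\tilde w \mid z_0)\,\alpha(\tilde w)\,\psi(R(\tilde w))$ must match $\mu(\tilde w_T)\tilde p^r(R(\tilde w)\mid \tilde w_T)\,\alpha^r(R(\tilde w))$; this is precisely the elementary identity
\[
a \cdot \bigl(1 \wedge \tfrac{b}{a}\bigr) \;=\; b \cdot \bigl(1 \wedge \tfrac{a}{b}\bigr),
\]
applied with $a = \mu(z_0)\tilde p(\tilde w \mid z_0)\,\psi(R(\tilde w))$ and $b = \mu(\tilde w_T)\tilde p^r(R(\tilde w)\mid \tilde w_T)$, which is the content of the formulas defining $\alpha$ in Algorithm \ref{algo:mh-approx}.

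\paragraph{Part 2: exact case.}

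When the approximate process coincides with the exact one, $\tilde p = p$ and $\tilde p^r = p^r$. By Proposition \ref{prop:path-density} at stationarity we have $p(w) = \mu(w_0)\,p(w\mid w_0)$ and $p^r(w^r) = \mu(w^r_0)\,p^r(w^r\mid w^r_0)$, and the starting point of $R(\tilde w)$ is $\tilde w_T$. Combining this with equation \eqref{eq:backward-density}, namely $p(w) = p^r(R(w))/\psi(R(w))$, gives
\[
\mu(z_0)\,p(\tilde w \mid z_0)\,\psi(R(\tilde w)) \;=\; \mu(\tilde w_T)\,p^r(R(\tilde w)\mid \tilde w_T),
\]
so the ratio inside the minimum defining $\alpha(\tilde w)$ equals $1$, and likewise for $\alpha(\tilde w^r)$ in the $\gamma=-1$ branch via the same identity read backwards.

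\paragraph{Main obstacle.}

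The book-keeping is routine once the framework is set up; the only delicate point is justifying the descent from detailed balance on the augmented path space to detailed balance of the marginal kernel on $E \times \{\pm 1\}$, and making sure the Jacobian $\psi(R(\tilde w))$ is placed on the correct side in each direction (forward vs. backward start). I would verify this by writing both directions of the identity explicitly, using the pair of acceptance formulas in Algorithm \ref{algo:mh-approx}, and confirming that the $\psi$ factor appearing in one direction is exactly the one removed by the change of variables in the other.
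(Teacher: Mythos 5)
Your proposal is correct and follows essentially the same route as the paper: reduce detailed balance to the cross-sign identity for test sets $A,B\subset E$, lift to the path space, change variables via $R$ using the volume-change assumption to pick up the $\psi$ factor, and conclude with the elementary identity $a\,(1\wedge b/a)=b\,(1\wedge a/b)$; the exact-process case then follows from \eqref{eq:backward-density} exactly as you describe. You in fact spell out the final pointwise cancellation more explicitly than the paper, which simply declares it ``clearly true.''
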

\begin{proof}
    For the first claim, write $\tilde{q}$ the Markov kernel.
    To prove detailed balance, it is enough to prove that for any $A,B \subset E$:
    \begin{equation}
        \int_A \mu(z) \tilde{q}(B\times\{-1\}|z) \, \mathrm{d} z = \int_B \mu(z) \tilde{q}^r(A\times\{1\}|z) \, \mathrm{d} z.
        \label{eq:proof-detailed-balance-approx}
    \end{equation}
    We have that
    \[
        \int_A \mu(z) \tilde{q}(B\times\{-1\}|z) \, \mathrm{d} z = \int_W \mu(w_0) \tilde{p}(w|w_0) 1_{A}(w_0) 1_B(w_T) \alpha(w) \, \mathrm{d} w
    \]
    and
    \begin{align*}
        \int_B \mu(z) \tilde{q}^r(A\times\{1\}|z) \, \mathrm{d} z &= \int_{W^r} \mu(w^r_0) \tilde{p}^r(w^r|w^r_0) 1_{A}(w^r_T) 1_B(w^r_0) \alpha(w^r) \, \mathrm{d} w^r \\
        &= \int_{W} \mu(R(w)_0) \tilde{p}^r(R(w)|R(w)_0) 1_{A}(w_0) 1_B(w_T) \alpha(R(w)) \frac{1}{\psi(R(w))} \, \mathrm{d} w^r.
    \end{align*}
    Equation \ref{eq:proof-detailed-balance-approx} is therefore equivalent to
    \[
        \mu(w_0) \tilde{p}(w|w_0) \alpha(w) = \mu(R(w)_0) \tilde{p}^r(R(w)|R(w)_0) \alpha(R(w)) \frac{1}{\psi(R(w))},
    \]
    which is clearly true. For the second claim, check that Equation \eqref{eq:backward-density}  implies that $\alpha = 1$ in the exact case.
\end{proof}

\subsection{BPS and Zig-Zag}
\label{subsec:ZZ-BPS}
We show here how to apply the previous results to BPS and the Zig-Zag Process. In particular, the path spaces for both of these process will be made explicit, the reversal application $R$ will be described and it will be shown to preserve volumes in both cases.

\subsubsection{BPS}

Here, $E = \mathbb{R}^d \times \{ v \in \mathbb{R}^d \text{ such that } \|v\| = 1 \}$. For $z \in E$, we write $z = (x,v)$ with $x \in \mathbb{R}^d$ and $v \in \{ v \in \mathbb{R}^d \text{ such that } \|v\| = 1 \}$.
For BPS, jumps are a reflection against $\nabla \pi$, which is deterministic. Hence, the jump kernel does not need to be parameterized, and therefore 
\[
    V = \{ 0 \}
\] 
is a singleton (where $0$ is meaningless). The jump is then defined as
\begin{align*}
    \xi_{(x,v)}: V &\rightarrow E \\
    0 &\mapsto \left(x,v - 2 (\nabla \pi(x) \cdot v) \frac{\nabla \pi(x)}{\|\nabla \pi(x)\|^2}\right).
\end{align*}
The reverse process is similar, and 
\begin{align*}
    V^r &= V, \\
    \xi^r_z &= \xi_z^{-1}.
\end{align*}
The application $R$ can be then written as
\[
    R(z_0,t_0,v_0,...,t_k,v_k) = R(z_T,T - t_k,v_k, ..., T - t_0,v_0).
\]
Furthermore, for a fixed $t_0,v_0,...,t_k,v_k$, the application $z_0 \mapsto z_T$ preserves the volume as it can be written as a composition of measure preserving operations, since the maps $\xi_z$ and the deterministic flow each preserve the volume.
Therefore, $R$ preserves the volume, i.e. 
\[
    \psi(w^r) = 1.
\]
For similar reasons, we can also see that $R$ has the same regularity as $\nabla \pi$, i.e. if $\nabla \pi$ is $C^k$, so is $R$.

\subsubsection{The Zig-Zag Process}

Here, $E = \mathbb{R} \times \{-1,1\}^d$. For $z \in E$, we write $z = (x,v)$ with $x \in \mathbb{R}^d$ and $v \in \{-1,1\}^d$.
Unlike BPS, the jump kernel is not deterministic. It can be characterized however by which velocity is flipped. Hence
\[
    V = \{1,...,d\}.
\]
The jump is then defined as
\begin{align*}
    \xi_{(x,v)}: V &\rightarrow E \\
    i &\mapsto (x,v_1,...,v_{i-1}, -v_i,v_{i+1},...,v_d).
\end{align*}
The reverse process is similar and
\begin{align*}
    V^r &= V, \\
    \xi^r_z &= \xi_z.
\end{align*}
The application $R$ can be then written as:
\[
    R(z_0,t_0,v_0,...,t_k,v_k) = R(z_T,T - t_k,v_k, ..., T - t_0,v_0),
\]
since the same coordinate is flipped at each event in the backward representation as in the forward representation.
Furthermore, for a fixed $t_0,v_0,...,t_k,v_k$, the application $z_0 \mapsto z_T$ preserves the volume as it can be written as a composition of measure preserving operations, since the maps $\xi_z$ and the deterministic flow preserve the volume.
Therefore, $R$ preserves the volume, i.e. 
\[
    \psi(w^r) = 1.
\]
Since the deterministic flow is $C^\infty$, and so are the jumps $\xi_z$, we can see that $R$ is $C^\infty$.

\newpage

\section{Refreshing PDMP velocities using a stopping criterion}
\label{sec:nuts}
PDMP samplers, like most velocity-driven samplers, require some form of velocity refreshment in order to support ergodicity. Determining the optimal refreshment rate is a significant challenge, even in rather simple cases. Moreover, the optimal refreshment rate may depend on the position within the state space, rendering a single, fixed parameter inefficient across different regions.

To address this challenge, we introduce an algorithm inspired by the NUTS criterion, as used in HMC \cite{hoffman2014no}, adapted for PDMP samplers. This algorithm constructs a path until a stopping criterion is satisfied, with minimal assumptions regarding the criterion. We propose a modified version of the No-U-Turn stopping criterion that meets these assumptions. 

\subsection{Stopping criterion}

In this context, a stopping criterion is a function that takes a path as input (of finite but otherwise arbitrary length), and outputs whether the path is valid, where `validity' reflects that it is still worthwhile to further extend the trajectory. For example, the (simplified) No-U-Turn Criterion checks if the path has a U-turn inside \cite{hoffman2014no}, in the sense that it verifies that for every points $(x_{t_1},v_{t_1})$ and $(x_{t_2},v_{t_2})$ of the trajectory with $t_1 < t_2$,
\begin{align*}
    \frac{\mathrm{d} \|x_{t_1} - x_{t_2}\|^2}{\mathrm{d}t_2} &\geq 0, \\
    \frac{\mathrm{d} \|x_{t_1} - x_{t_2}\|^2}{\mathrm{d}t_1} &\leq 0.
\end{align*}
The first inequality indicates that when building the trajectory forward in time, it should not reduce the distance with any other point of the trajectory, while the second indicates the same when building the trajectory backward in time. The two conditions can be rewritten as
\begin{align*}
    (x_{t_2} - x_{t_1}) \cdot v_2 \geq 0, \\
    (x_{t_2} - x_{t_1}) \cdot v_1 \geq 0.
\end{align*}

We call a stopping criterion \textit{consistent} if it satisfies the following assumption:
\begin{assumption}[consistent stopping criterion]
\label{ass:consistent}
Let $(X_s)_{s\in \mathbb{R}}$ be a ``full" path. Then:
\begin{enumerate}
    \item if $(X_s)_{s\in I}$ is valid for an interval $I$, then $(X_s)_{s\in J}$ is valid for any interval $J \subset I$,
    \item if $(X_s)_{s\in I}$ is not valid for an interval $I$, then $(X_s)_{s\in J}$ is not valid for any interval $J \supset I$.
\end{enumerate}
\end{assumption}
Furthermore, we require that:
\begin{assumption}
    \label{ass:criterion-events}
    The validity of the criterion for a path \((X_s)_{s\in \mathbb{R}}\) depends only on the points where events occur.
\end{assumption}
This implies that, for example, when extending an interval \([0,t]\) until the criterion is no longer valid, the process will always stop precisely at an event.

The criterion used in experiments is inspired by the No-U-Turn criterion introduced for HMC in NUTS \cite{hoffman2014no}, but has significant differences in view of \ref{ass:criterion-events}. Perhaps the key difference is that while the NUTS criterion depends on the full simulated Hamiltonian trajectory, our proposed criterion depends only on the state of the trajectory at the jump times which are simulated as part of PDMP.

\begin{definition}[PDMP No-U-Turn criterion]
    Let $Z = \left( Z_t : t \in  \mathbf{R} \right)$ be a PDMP path which experiences events (i.e. velocity jumps) at times $T = \left\{ t_i : 1 \leq i \leq N \right\} \subseteq \mathbf{R}$. For $a < b$, write $T_{\left[ a, b\right]} = T \cap \left[ a, b\right]$. We then say that $Z$ satisfies the PDMP No-U-Turn Criterion on $\left[ a, b\right]$ if
    \begin{align*}
        &\forall t_i < t_j \in [a,b], (X_{t_i} - X_{t_j},V_{t_j^-}) < 0, \\
        &\forall t_i < t_j \in [a,b] \text{ and } t_j < b, (X_{t_i} - X_{t_j},V_{t_j^+}) < 0, \\
        &\forall t_i > t_j \in [a,b], (X_{t_i} - X_{t_j},V_{t_j^+}) > 0, \\
        &\forall t_i > t_j \in [a,b], \text{ and } t_i > a, (X_{t_i} - X_{t_j},V_{t_j^-}) > 0.
    \end{align*}
\end{definition}

\subsection{Algorithm}

We consider a criterion that follow Assumptions \ref{ass:consistent} and \ref{ass:criterion-events}. The following algorithm is an adaptation to continuous time of the idea behind NUTS \cite{hoffman2014no}.
\begin{algorithm}[H]
    \caption{No-U-Turn PDMP sampler}
    \label{algo:nuts-exact}
    \begin{algorithmic}[1]
        \State \textbf{Input:} state $z=(x,v)$
        \State Refresh the velocity
        \State Sample $\alpha \sim unif(0,1)$
        \State Sampling a path:
        \begin{enumerate}
            \item Build a trajectory $Z_t$ for $t\in \mathbb{R}$ with $Z_0 = z$ (both forward and backward)
            \item Let $T = \max \{t\in \mathbb{R}^+ \, | \, Criterion((Z_s)_{s\in [-\alpha t, (1-\alpha)t]}) \text{ is valid}\} $
            \item Let $X = (Z_{t - \alpha T})_{t \in [0,T]}$ and $l = \alpha T$
        \end{enumerate}
        \If{the criterion was hit going forward}
            \State Sample $l' \sim \nu$, where $\nu$ is a probability measure on $[0,T]$ with density $\nu(l) = \frac{2(T-t)}{T^2}$.
        \Else
            \State Sample $l' \sim \nu$, where $\nu$ is a probability measure on $[0,T]$ with density $\nu(l) = \frac{2t}{T^2}$.
        \EndIf
        \State \textbf{Output:} $X_{l'}$        
    \end{algorithmic}
\end{algorithm}

Note that while this algorithm involves continuous-time simulation as a part of its inner workings, the Markov chain which it generates is ultimately a discrete-time process. The PDMP acts as an instrumental proposal which we expect to provide useful spatial exploration. 

\begin{theorem}
    Assuming a PDMP with invariant distribution $\mu$, such that the reversal $R_t$ and its volume change $\Psi_t$ are continuous in $t$. It then holds that Algorithm \ref{algo:nuts-exact} defines a Markov chain with $\mu$ as an invariant distribution.
\end{theorem}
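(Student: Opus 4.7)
The strategy is to establish invariance of $\mu := \pi \otimes \psi$ under the one-step Markov kernel $K$ defined by Algorithm \ref{algo:nuts-exact} by showing detailed balance on an augmented state space. Since the velocity refreshment step preserves $\mu$ by independently resampling $v \sim \psi$, it suffices to analyze the remaining trajectory construction and weighted output selection assuming $(x,v) \sim \mu$ at the entry of step 4.

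I augment the state with the full two-sided PDMP trajectory $(Z_t)_{t \in \mathbb{R}}$ emitted from $z$, the slice variable $\alpha \sim U(0,1)$, and the output location $l' \in [0,T]$. By Assumptions \ref{ass:consistent} and \ref{ass:criterion-events}, the maximal valid interval around $z$ takes the form $[-c,d]$ with $c,d \geq 0$, and while $c$ and $d$ depend on the choice of origin within the path, the total extent $c+d$ is intrinsic to the trajectory. Accordingly, the stopping time admits the explicit forms $T(\alpha) = d/(1-\alpha)$ in the forward-hit regime $\alpha \in (0, c/(c+d))$ and $T(\alpha) = c/\alpha$ in the backward-hit regime. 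The skew-reversibility from Section \ref{subsec:pdmp-reverse} ensures that re-centering the trajectory at any point $z' = Z_{s_\star}$ (with velocity reversal) produces the same path measure up to the Jacobian $\Psi$. Consequently, running Algorithm \ref{algo:nuts-exact} from $z'$ with the matching slice $\alpha' = l'/T$ reproduces the same trajectory, stopping time $T$, and hit direction.

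For detailed balance it then suffices, by disintegration, to verify that for each trajectory realization and each pair of points $z, z'$ on it, the joint density of $(l,l') = (\alpha T, l')$ produced by the forward transition $z \to z'$ matches the corresponding density for the reverse transition $z' \to z$, after absorbing the skew-reversibility Jacobian $\Psi$. In the forward-hit regime, the change of variables $\alpha \mapsto l$ has Jacobian $|d\alpha/dl| = d/T^2$; combining this with the weight $\nu(l' \mid T) = 2(T-l')/T^2$ and the identity $d = T-l$ yields a joint density proportional to $(T-l)(T-l')/T^4$, which is manifestly symmetric in $(l,l')$. The backward-hit regime is degenerate in the sense that $l = c$ is constant in $\alpha$, so one must use coordinates $(T,l')$ on the source side and $(T,m)$ on the target side; the Jacobian $|d\alpha/dT| = c/T^2$ combined with $\nu(l' \mid T) \propto l'$ produces a density $\propto c \, l'/T^4$ from $z$, matched on the reverse side by $\propto c' \, m / T^4 = l' \, c / T^4$ (using $c' = l'$ and $m = c$). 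Preservation of hit direction under the swap $z \leftrightarrow z'$ ensures that the two regimes do not mix.

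The main obstacle will be the careful execution of this matching-of-densities computation, particularly in the backward-hit case where the parameterization must be swapped between source and target. The continuity hypotheses on $R_t$ and $\Psi_t$ are used to lift the algebraic identity from a single trajectory to the integrated path measure without pathology. Once detailed balance of $K$ is established on the augmented state, marginalization onto $E$ yields the claimed $\pi$-invariance of the resulting Markov chain.
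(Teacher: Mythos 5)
Your plan is, in substance, the paper's own argument. The paper proves (Proposition~\ref{prop:law_PlX}) that the density $\nu$ from which $l'$ is drawn is exactly the conditional law of $l$ given $X$, so that resampling $l'$ preserves the joint law of $(X,l)$ and hence the law of $X_{l'}=X_l=z\sim\pi$; your ``detailed balance on the augmented space'' with the symmetric joint density proportional to $(T-l)(T-l')/T^4$ (resp.\ $l\,l'/T^4$) is precisely this statement, and your Jacobians $|d\alpha/dl|=(T-l)/T^2$ and $|d\alpha/dT|=l/T^2$ are the same computation the paper performs, there via the change of variables $b=(1-\alpha)\frac{-t^b_{k_1}}{\alpha}$ in Lemma~\ref{lemma:F_g-measure}, with skew-reversibility \eqref{eq:backward-density} and the volume-change Lemma~\ref{lemma:g-volume-change} absorbing the $\pi(X_l)$ and $\Psi_l$ factors exactly as you propose to ``absorb the skew-reversibility Jacobian.''

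One claim in your write-up is false as stated and should be localized. Under Assumption~\ref{ass:consistent} alone the valid intervals containing the origin are not nested (e.g.\ $[-3,1]$ and $[-1,3]$ may both be valid while $[-3,3]$ is not), so there is no single ``maximal valid interval $[-c,d]$ around $z$,'' and the total extent $c+d$ is \emph{not} intrinsic to the trajectory: $T$ genuinely depends on $\alpha$, and the constants $c,d$ in your formulas $T(\alpha)=d/(1-\alpha)$ and $T(\alpha)=c/\alpha$ are only piecewise constant in $\alpha$. What is true, and is all your computation actually uses, is the local statement: at the realized $\alpha$ the binding endpoint is locked at an event for nearby values of $\alpha$ (this is where Assumption~\ref{ass:criterion-events} enters), which yields the stated Jacobians; and for the realized window $[0,T]$ with the matching slice $\alpha'=l'/T$, the re-rooted run recovers the same window because every intermediate interval $[l'-\alpha' t,\,l'+(1-\alpha')t]$ with $t\le T$ is a subinterval of $[0,T]$ and hence valid by consistency. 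With that repair your argument goes through and coincides with the paper's proof.
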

\begin{proof}
The theorem is a direct consequence of the following Proposition \ref{prop:law_PlX}.    
\end{proof}

\begin{proposition}
    \label{prop:law_PlX}
    Assume that $Z_t$ is a PDMP with invariant distribution $\mu$, such that the reversal $R_t$ and its volume change $\Psi_t$ are continuous in $t$. Let $X$ be constructible (i.e. could be obtained by the algorithm).
    
    If the law of the initial state of the algorithm is $\mu$, then the law $\mathbb{P}(l | X)$ is $\nu$ where $\nu$ is a probability measure on $[0,T]$ with density $\nu(t) = \frac{2(T-t)}{T^2}$ if the criterion was hit going forward, and $\nu(t) = \frac{2t}{T^2}$ if the criterion was hit backward.
\end{proposition}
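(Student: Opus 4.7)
I begin by identifying the underlying randomness: at stationarity, $Z_0 \sim \pi$ combined with the Markov property yields a shift-invariant two-sided trajectory $\mathbf{Z} = (Z_t)_{t \in \mathbb{R}}$, with $\alpha \sim \mathrm{Unif}(0,1)$ drawn independently. The algorithm is a deterministic function of $(\alpha, \mathbf{Z})$ producing $T$, $l = \alpha T$, the reparametrised path $X$, and a binary hit direction. Once the Proposition is established, the Markov-chain invariance asserted by the Theorem follows quickly: if $l \mid X \sim \nu$, then independently resampling $l' \sim \nu(\cdot \mid X)$ preserves the joint law of $(X, l)$, so that $X_{l'} \overset{d}{=} X_l = z \sim \pi$.

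I would partition the sample space into the events $F$ (``forward hit'') and $B$ (``backward hit''); the continuity of $R_t$ and $\Psi_t$ ensures these cover the space up to a null set. By symmetry it suffices to treat $F$. Fixing $\mathbf{Z}$, in the forward-hit regime a single invalidating forward event at original time $b^{\#} = b^{\#}(\mathbf{Z})$ stays ``active'' as $\alpha$ varies over a sub-interval, so $T(\alpha) = b^{\#}/(1-\alpha)$ and $l(\alpha) = \alpha b^{\#}/(1-\alpha)$. The central computation is the Jacobian
\[
\frac{dl}{d\alpha} \;=\; \frac{b^{\#}}{(1-\alpha)^2} \;=\; \frac{T^2}{b^{\#}} \;=\; \frac{T^2}{T-l},
\]
so that pushing the uniform density of $\alpha$ through $\alpha \mapsto l$ yields, conditional on $\mathbf{Z}$ and on $F$, a density of $l$ proportional to $(T-l)/T^2$. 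The analogous backward-hit computation, in which $l = |a^{\#}|$ is constant per $\mathbf{Z}$ while $T$ varies with $\alpha$, produces---after interchanging the roles of $l$ and $T$ via marginalisation over the outside trajectory---a density $\nu_B(l) \propto l/T^2$.

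The main obstacle is passing from conditioning on $\mathbf{Z}$ to conditioning on $X$ alone, which requires integrating out the outside trajectory $\mathbf{Z}_{\mathrm{out}}$. Two structural facts make this tractable: (i) the Jacobian $(T-l)/T^2$ depends only on $(l, T)$, both determined by $(X, l)$, so it factors out of the integral over $\mathbf{Z}_{\mathrm{out}}$; and (ii) by the Markov property and shift invariance of $\mathbf{Z}$, the conditional law of $\mathbf{Z}_{\mathrm{out}}$ given $X$ depends on $X$ only through its boundary values $X_0, X_T$. A careful reading of the ``first invalidating event'' structure of the PDMP No-U-Turn criterion, together with these facts, is then needed to confirm that the remaining integrand contributes only an $l$-independent normalising constant, preserving the shapes $\nu_F(l) \propto (T-l)$ and $\nu_B(l) \propto l$. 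Verifying this factorisation cleanly---especially ensuring that the event ``first invalidating forward event is at $b^{\#} = T - l$'' does not smuggle in additional $l$-dependence through the density of the outside events---is the most delicate step of the argument.
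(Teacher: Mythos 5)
Your core computation is correct and is essentially the paper's: pushing $\alpha\sim\mathrm{Unif}(0,1)$ through the stopping map, with the invalidating event pinning one endpoint of the window, gives the Jacobian $T^2/(T-l)$ (forward hit) or $T^2/l$ (backward hit), and hence the densities $(T-l)/T^2$ and $l/T^2$. Your opening remark on how the Markov-chain invariance follows also matches the paper.

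However, the step you defer as ``the most delicate'' is where the real content of the proof lies, and your sketch of it has a gap. First, a conceptual point: conditionally on the full trajectory $\mathbf{Z}$ and on $\alpha$, the pair $(X,l)$ is deterministic, and for fixed $\mathbf{Z}$ the value of $X$ already determines $\alpha$ and hence $l$; so there is no meaningful ``density of $l$ conditional on $\mathbf{Z}$'' — indeed in the backward-hit case $l$ is constant in $\alpha$ for fixed $\mathbf{Z}$, as you note. The correct object is the joint density of $(X,l)$ on the stopped-path space, which must be shown to factor as (a function of $X$ alone) times the Jacobian. Second, and more substantively: after marginalising the outside trajectory, the path-density factor is expressed in coordinates centred at the starting point $X_l$, namely $\pi(X_l)\,q^r_{[-l,0]}(\cdot\mid X_l)\,q_{[0,T-l]}(\cdot\mid X_l)$, and its independence of $l$ is not simply ``shift invariance of the law of $\mathbf{Z}$.'' On the skeleton coordinates the time shift is implemented by reversing the backward portion of the path (the map $g$ of Section \ref{subsec:nuts-proof}), which introduces the volume change $\Psi_l$ of the reversal $R_l$; this must then cancel against the factor $\psi$ in the skew-reversibility identity $p(w)=p^r(R(w))/\psi(R(w))$. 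This is precisely where the hypotheses that $R_t$ and $\Psi_t$ are continuous in $t$ enter (via Lemma \ref{lemma:technical-vol-change}, applied to the ``diagonal'' evaluation $t=l$ of the family $R_t$) — not, as you suggest, to guarantee that the forward-hit and backward-hit events partition the sample space. For BPS and Zig-Zag one has $\Psi\equiv 1$ and your argument closes, but for the general statement the $\Psi_l$ bookkeeping is the missing ingredient.
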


\begin{proof}
We give here the intuition of the proof; Appendix \ref{sec:nuts-proof} provides the rigorous details. In principle, when looking at a path, it is impossible to distinguish whether the process was built forward in time or backward in time. If the process is at equilibrium, it is also impossible to know where the process started from. On this basis, one expects that the conditional law of $l|X$ should then be uniform. 

However, this perspective misses one piece of the puzzle. Let $(X,l)$ be a path built by the algorithm, noting that the path length $T$ can be directly deduced from the information included in $X$. Assumption \ref{ass:criterion-events} tells us that the path was stopped at an event at one of the extremities. Without loss of generality, we assume that it was stopped by a backward event. To this $(X,l)$ corresponds a unique $\alpha$ in the construction, which can be computed $\alpha = l/T$. Let $a = -\alpha T$ and $b = (1-\alpha)T$. Since the path is stopped by an event, a small modification of $\alpha$ will not change $a$.  However, for $\alpha$ in the neighbourhood, $b(\alpha) = a - a/\alpha$. Therefore 
\[
    b'(\alpha) = -\frac{a}{\alpha^2}=\frac{l}{(l/T)^2} = \frac{T^2}{l}.
\]
The associated change of volume is then readily computed as $\frac{1}{b'}$, which gives the announced formula.
\end{proof}

\newpage

\section{Doubly-Adaptive PDMP Samplers}
\label{sec:full-algo}
It is now possible to combine the two previous ideas into a single algorithm, whereby we use both i) an approximate PDMP with a locally adaptive time step for the approximation, and ii) a stopping criterion for a locally adaptive path length. 
The idea is to use an approximated process in the algorithm introduced in Section \ref{sec:nuts}. 

\begin{proposition}
    \label{prop:law-l-approximated}
    Assume that the exact process is a PDMP such that the reversal $R_t$ and its volume change $\Psi_t$ are continuous in $t$.
    Using the same algorithm to build $(Z,l)$ as in Section  \ref{sec:nuts} but using an approximate process yields the following $\nu$ on $[0,T]$ for the law of $l | Z$:
    \[
        \nu(l) \propto \mu(Z_l) q^r_{[-l,0]}((Z_{t+l})_{t\in [-l,0]}|Z_l) q_{[0,T-l]}((Z_{s+l})_{s\in [0, T-l]}|Z_l) \left| \frac{T-l}{T^2} \right| \Psi_l(Z_{[0,l]}),
    \]
    if the process was stopped forward, or
    \[
        \nu(l) \propto \mu(Z_l) q^r_{[-l,0]}((Z_{t+l})_{t\in [-l,0]}|Z_l) q_{[0,T-l]}((Z_{t+l})_{t\in [T-l,T]}| Z_l) \left| \frac{l}{T^2} \right| \Psi_l(Z_{[0,l]}),
    \]
    if the process was stopped backward.
\end{proposition}
\begin{proof}
    Note that the state space and time reversal of the approximate process is the same as that of the exact process.
    The proof follows the proof of Appendix \ref{sec:nuts-proof}. The only difference between the proof for the exact process of Appendix \ref{sec:nuts-proof} and this proof is that the final simplification of Proposition \ref{prop:nuts-proof-law-l} cannot be done for the approximated process, which leads to the above formulas.
\end{proof}
\begin{remark}
    The assumption that $\mu$ is invariant for the exact continuous-time PDMP is not required.
\end{remark}
\begin{algorithm}[H]
    \caption{Doubly adaptive PDMP sampler}
    \label{algo:doubly-adaptive}
    \begin{algorithmic}[1]
        \State \textbf{Input:} state $z=(x,v)$
        \State Refresh the velocity
        \State Sample $\alpha \sim unif(0,1)$
        \State Sampling a path:
        \begin{enumerate}
            \item Build a trajectory (potentially adaptively) $Z_t$ for $t\in \mathbb{R}$ with $Z_0 = z$ (both forward and backward)
            \item Let $T = \max \{t\in \mathbb{R}^+ \, | \, Criterion((Z_s)_{s\in [-\alpha t, (1-\alpha)t]}) \text{ is valid}\} $
            \item Let $X = (Z_{t - \alpha T})_{t \in [0,T]}$ and $l = \alpha T$
        \end{enumerate}
        \If{the criterion was hit going forward}
            \State Sample $l' \sim \nu$, where $\nu$ is a probability measure on $[0,T]$ with density $\nu(l) = \frac{2(T-t)}{T^2}$.
        \Else
            \State Sample $l' \sim \nu$, where $\nu$ is a probability measure on $[0,T]$ with density $\nu(l) = \frac{2t}{T^2}$.
        \EndIf
        \State Accept $l'$ with probability 
        \[
            \frac{\mu(X_{l'})}{\mu(X_l)}\frac{ q^r_{[-l,0]}((X_{t+l'}')_{t\in [-l',0]}|X_{l'}) q_{[0,T-l']}((X_{t+l'})_{t\in [T-l',T]}|X_{l'}) \Psi_{l'}(X_{[0,l']})}{q^r_{[-l,0]}((X_{t+l})_{t\in [-l,0]}|X_l) q_{[0,T-l]}((X_{t+l})_{t\in [T-l,T]}|X_l) \Psi_l(X_{[0,l]})},
        \] 
        otherwise set $l' = l$.
        \State \textbf{Output:} $X_{l'}$        
    \end{algorithmic}
\end{algorithm}

\begin{remark}[BPS and Zig-Zag]
    For BPS and Zig-Zag, $\Psi_l \equiv 1$.
\end{remark}

\begin{theorem}
    Assume that the exact process is a PDMP such that the reversal $R_t$ and its volume change $\Psi_t$ are continuous in $t$.
    \begin{enumerate}
        \item Algorithm \ref{algo:doubly-adaptive} defines a Markov chain with $\mu$ as an invariant distribution.
        \item Furthermore, if $\mu$ is the invariant distribution of the exact process and if the approximated process is the exact process, the acceptance probability is 1.
    \end{enumerate}
\end{theorem}
\begin{proof}
    This is a direct consequence of Proposition \ref{prop:law-l-approximated}.
\end{proof}

\newpage

\section{Experiments}
\label{sec:experiments}
This section presents our experimental results, utilizing the Bouncy Particle Sampler given its prevalence among PDMP samplers.
The experiments are divided into three main parts. Section \ref{subsec:exp-dim} analyzes the dimensionality scaling of our algorithms and compares it to the exact PDMP in some stylised settings for which the ideal PDMP can be simulated exactly. We demonstrate that in terms of the computational complexity required to obtain an independent sample in high dimension, there is no substantial loss of efficiency relative to the exact PDMP.
Section \ref{subsec:exp-funnel} tests our algorithms on a funnel distribution, a case where exact PDMP simulation is difficult. These experiments show the practical value of adjusting step size and path length locally. Section \ref{subsec:simulated-tempering} then evaluates our method in a simulated tempering setup. In both cases, our approach performs better than Hamiltonian Monte Carlo (HMC). A key takeaway from both sections is that our algorithm reliably samples the distribution tails, whereas HMC struggles to do so.

\subsection{Dimensionality scaling}
\label{subsec:exp-dim}

In this section, we study the sampling of the $d$-dimensional multivariate normal distribution $\mathcal{N}(0_d,I_d)$. We use the Effective Sample Size (ESS) on the first coordinate as a quality metric of the generated samples. For exact PDMPs, the complexity is the number of events, while for approximated PDMPs, the complexity is the number of gradient evaluations.

\subsubsection{Exact BPS-NUTS}

In the case of a multivariate normal distribution $\mathcal{N}(0_d,I_d)$, BPS is expected to require $O(\sqrt{d})$ events per independent sample (see \cite{deligiannidis2021scaling} for a detailed analysis). We expect the $O(\sqrt{d})$ scaling to hold for the NUTS version of exact BPS. 

\begin{remark}
    \label{rmk:exp-order1-gaussian}
   For Gaussian targets, the order 1 approximation of the rate is exact. Hence we studied the scaling properties of exact BPS-NUTS using the order 1 rate approximation using a large step size. 
\end{remark}

Figure \ref{fig:order1-scaling} shows the behavior of exact BPS using the No-U-Turn stopping criterion. A few key observations can be made.
First, Plots (a) and (b) indicate that the predicted $O(\sqrt{d})$ scaling is accurate in this case. The ESS per NUTS-step is roughly constant with the dimension and the number of PDMP events simulated per NUTS step is roughly scaling as $O(\sqrt{d})$. 

Second, as noted in Remark \ref{rmk:exp-order1-gaussian}, the order 1 approximation is used instead of the exact process. We can then see on (c) and (d) of Figure 1, that the overhead introduced by computing the approximate process (which is exact in this case) is between 7 and 8, that is each event requires around 7-8 gradient evaluations. Our code is not optimized, and we believe it should be possible in general to reduce this number by an additional factor of three or so. 

\begin{figure}
\centering
    \subfigure[]{\includegraphics[width=0.39\linewidth]{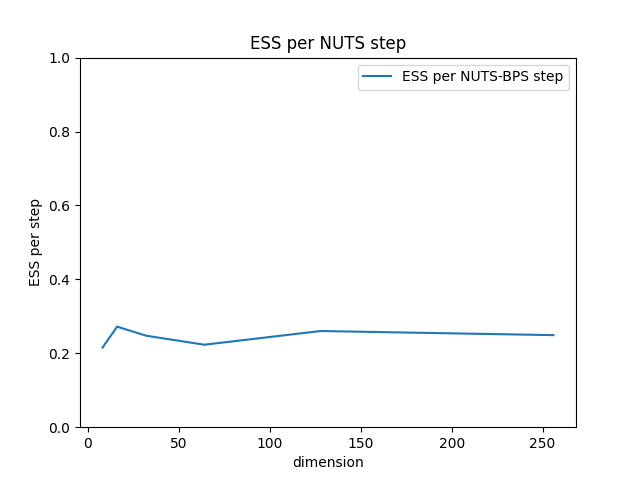}}
    \subfigure[]{\includegraphics[width=0.39\linewidth]{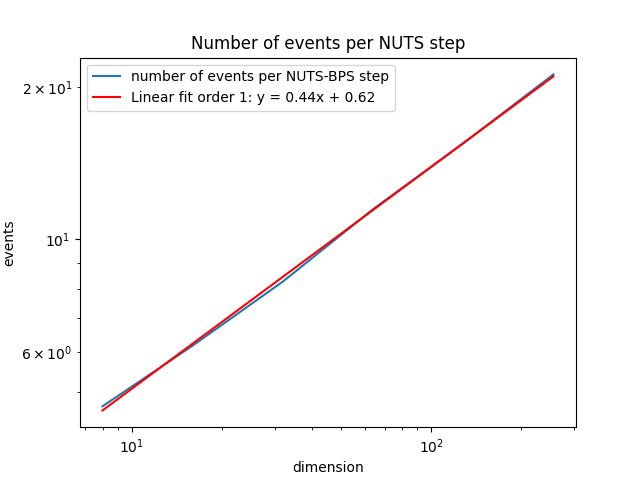}}
    \subfigure[]{\includegraphics[width=0.39\linewidth]{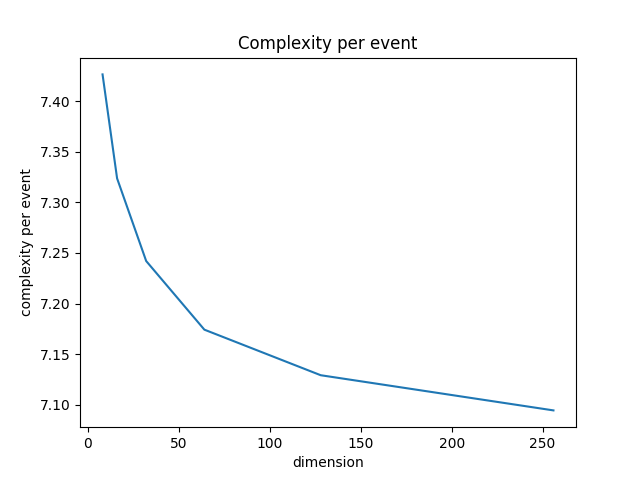}}
    \subfigure[]{\includegraphics[width=0.39\linewidth]{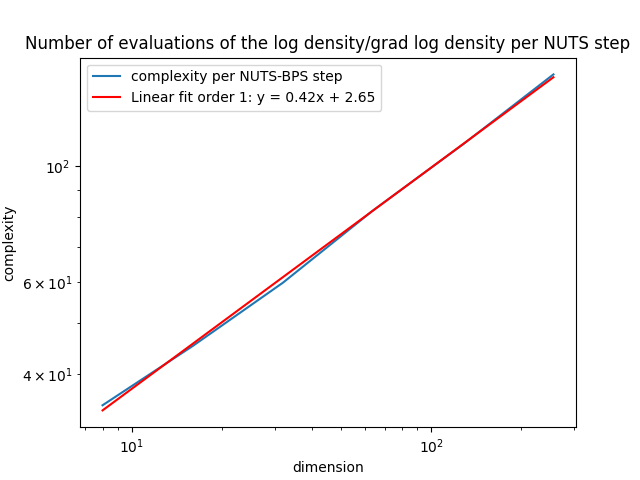}}
    \caption{Gaussian targets, order 1 approximation, 1000 NUTS-BPS steps, for various dimensions.}
    \label{fig:order1-scaling}
\end{figure}

\subsubsection{BPS-NUTS with no step-size adaptivity}

The aim of this experiment is to investigate the impact of the step size \( h \) in the approximation, using the No-U-Turn criterion. We assume a piecewise constant approximation of the rate function (order 0) so our BPS approximation is not exact. Specifically, we seek to determine:
\begin{enumerate}
    \item Whether the optimal value of \( h \) varies with the dimension \( d \).
    \item Whether the acceptance rate for the optimal \( h \) changes with \( d \).
    \item Whether the complexity scaling remains \( O(\sqrt{d}) \).
\end{enumerate}

In our BPS implementation, we have chosen the velocity \( v \) to have a norm of 1. This implies that at stationarity, the component of \( v \) in any specific direction decreases with the dimension as \( O(1/\sqrt{d}) \). The optimal step size \( h \) is determined using a grid search for each dimension, with the objective being to maximize the ESS in the first dimension divided by the computational complexity.

\begin{figure}
\centering
    \subfigure[]{\includegraphics[width=0.29\linewidth]{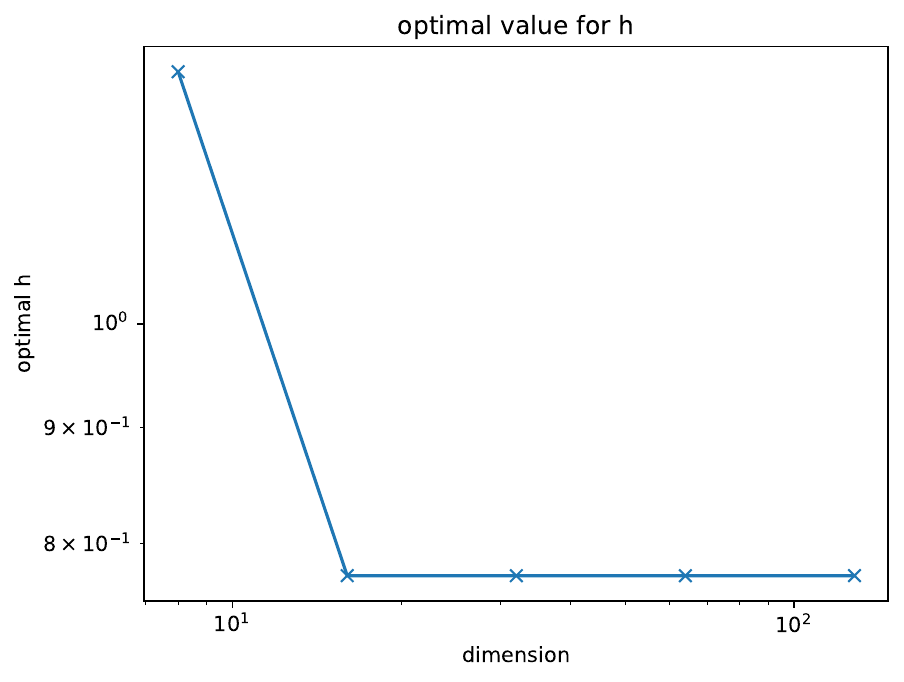}}
    \subfigure[]{\includegraphics[width=0.29\linewidth]{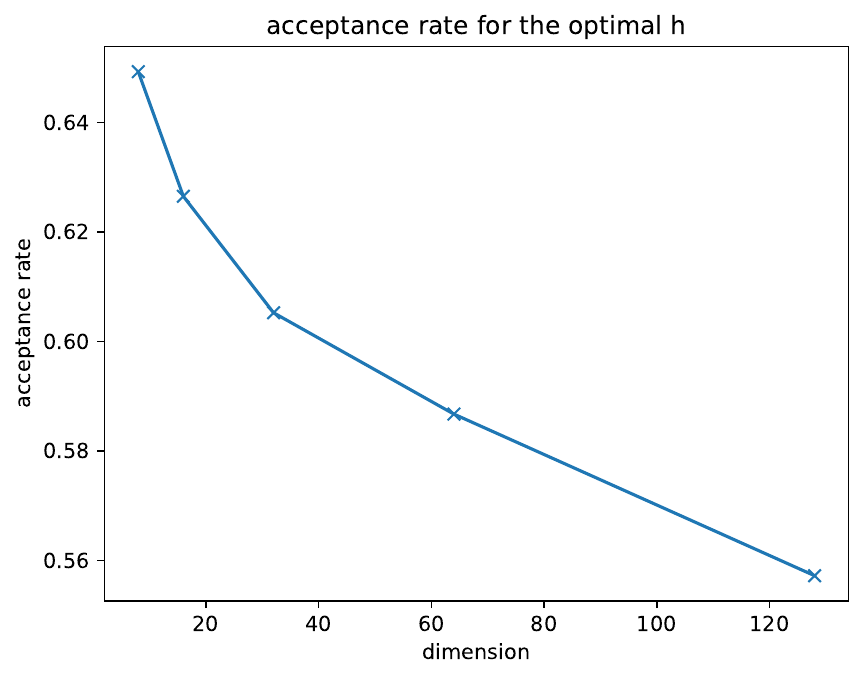}}
    \subfigure[]{\includegraphics[width=0.29\linewidth]{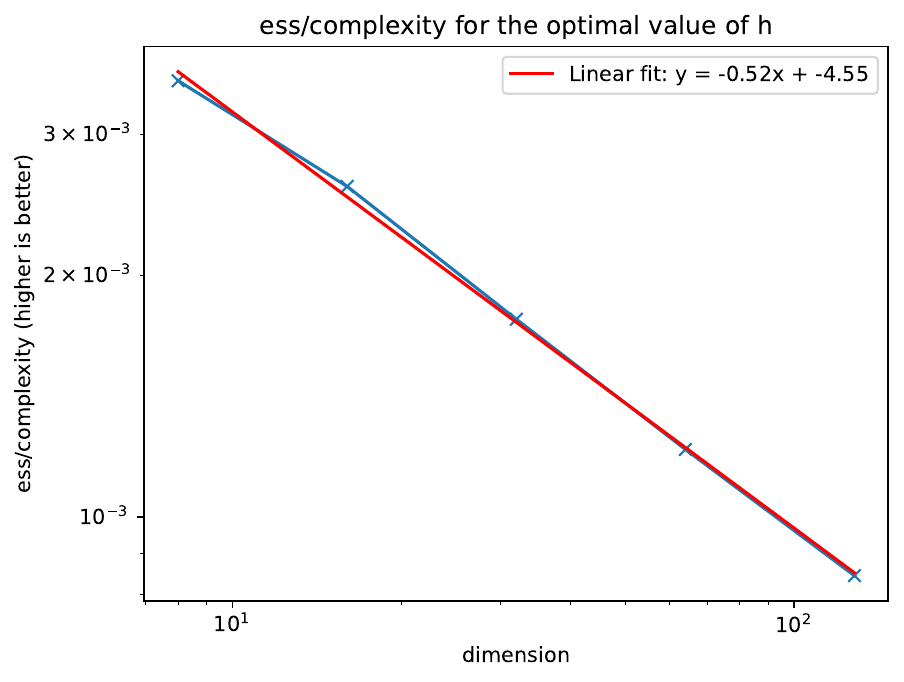}}
    \caption{(a) Optimal scaling for the stepsize $h$. (b) Acceptance rate for the optimal stepsize $h$. (c) Scaling of the ESS/complexity for the optimal value of $h$.}
    \label{fig:h-scaling}
\end{figure}
Figure \ref{fig:h-scaling}(a) clearly indicates that the value of \( h \) should not be scaled with dimension.

Moreover, the acceptance rate does not appear to converge to a non-zero value when using this optimal \( h \), as shown in Figure \ref{fig:h-scaling}(b). An optimal acceptance rate converging to \( 0 \) would represent a significant departure from the usual behavior of MCMC algorithms, such as Metropolis or HMC. Nevertheless, Figure \ref{fig:h-scaling} does not provide a definitive answer to this question, as the acceptance rate may converge to a non-zero value in higher dimensions.

Finally, Figure \ref{fig:h-scaling}(c) shows that the scaling of the complexity per effective sample for BPS-NUTS is $O( \sqrt{d})$ which matches the optimal BPS computational scaling for sampling a Gaussian distribution \cite{deligiannidis2021scaling}.

\subsection{Funnel distribution}
\label{subsec:exp-funnel}

In this experiment, we show that our algorithms perform robustly when applied to the task of sampling from the funnel distribution, in the sense that without substantial hand-tuning, the algorithm is numerically stable and efficiently converges to the target distribution at a reasonable computational cost.

The so-called funnel distribution is a toy problem that illustrates a common pathology in Bayesian hierarchical modeling \cite{neal2003slice}. The posteriors in these models are characterized by a region of low volume with high density which fans out to a region of high volume and low density. This presents a challenging sampling problem as the sampler will need to adapt across the space. Traditional adaptation in HMC can be overly assertive \cite{betancourt2015hamiltonian} and it is sometimes recommended to run a secondary chain at a smaller step size to ensure consistent inference. 

We study the 2d funnel $x_1 \sim \mathcal{N}(0,a^2)$ and $x_2 \sim \mathcal{N}(0,e^{\frac{x_1}{b}})$ parameterized by $a$ and $b$ in $\mathbb{R}$.
First, we showcase the effect of the adaptation on the step size in Figure \ref{fig:funnel-traj-ex}, for $a = 3$ and $b = 2$. As the sampler reaches a narrower section of the funnel the gradient evaluations get closer together. Due to the adaptive adjustment of $h$ in the sampler, the forward (blue) and backward (red) trajectories no longer share identical gradient evaluations.  
\begin{figure}[!h]
    \centering
    \includegraphics[width=0.5\linewidth]{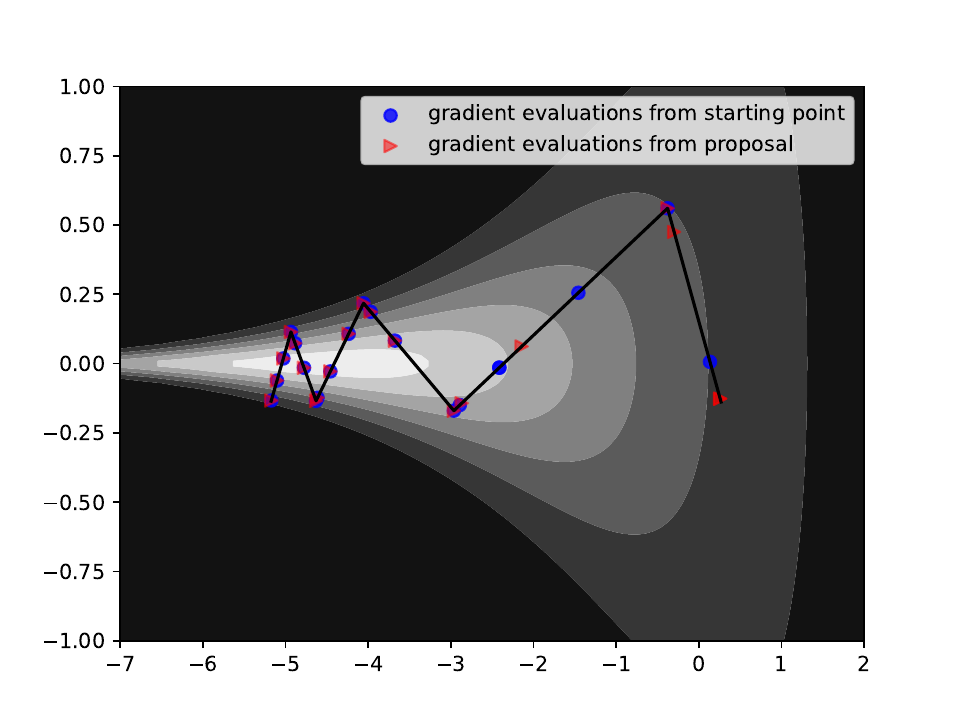}
    \caption{An example trajectory in a funnel for MHBPS-NUTS, with the location of gradient evaluations displayed in blue and red. The effect of the adaptivity on the stepsize can clearly be seen.}
    \label{fig:funnel-traj-ex}
\end{figure}

Second, we compare the performance of our algorithm with double adaptation to HMC with various step sizes in Figure \ref{fig:funnel-comparison}, using \( a = 3 \) and \( b = 1.5 \). Performance is assessed by an error metric computed using the marginal distribution of the first coordinate $X_1$, where we partition the space into three regions $X_1<-4, -4\leq X_1 \leq 4, X_1>4$. The error is defined to be the maximum of the discrepancy between the empirical and exact probabilities of these regions on the log scale. The figure shows results of 20 independent repetitions of each sampler, with each run consisting of 10,000 iterations. Due to the continual adaption of $h$ in our samplers, our algorithm outperforms HMC, which clearly struggles to sample the funnel.

\begin{figure}[!h]
    \centering
    \includegraphics[width=0.5\linewidth]{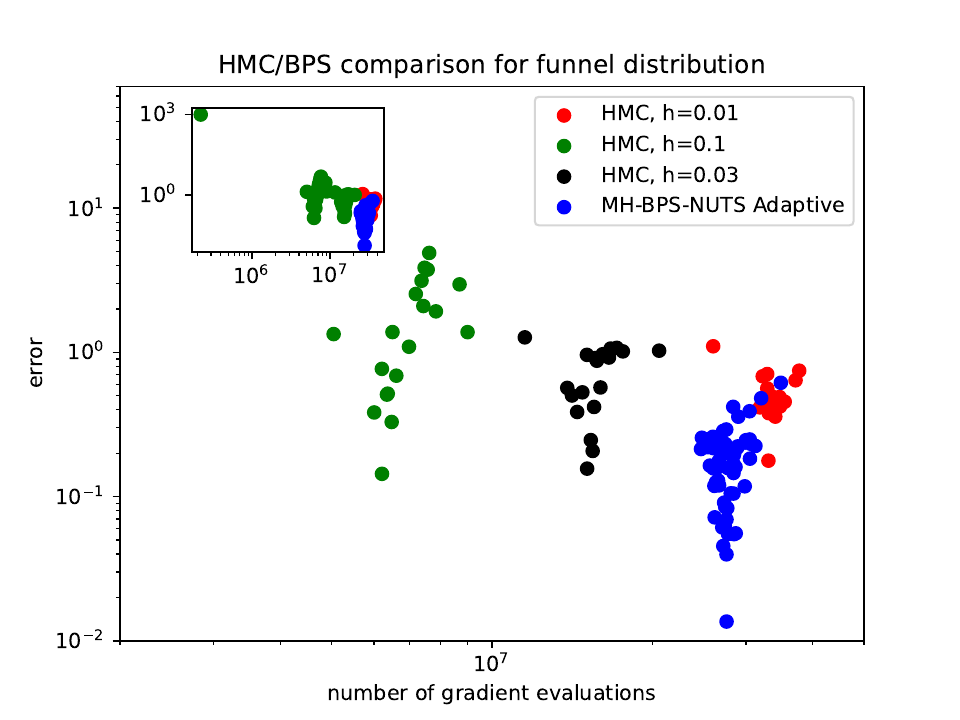}
    \caption{Comparison between HMC with different step sizes, and MHBPS-NUTS for the funnel.}
    \label{fig:funnel-comparison}
\end{figure}

\subsection{Simulated Tempering}
\label{subsec:simulated-tempering}

Simulated Tempering is an approach that constructs a path of distributions from a simple reference distribution $\pi_0$ to a complex target $\pi_1$, allowing the sampler to move between temperatures while targeting a single augmented distribution \cite{BironLattes2025,sutton2022,Tawn2020,Marinari1992}. A major attraction of this approach is that, when the reference distribution can be sampled independently, returns to the reference induce regenerative structure and hence offer a natural route to parallel computation via independent tours. This is one of the central motivations in recent work on non-reversible simulated tempering \cite{BironLattes2025}. However, a limitation of running a single sampler across all temperatures is that the kernel must remain effective over the entire path of distributions, making global tuning substantially more demanding. 

Our final example is designed to illustrate this difficulty for a tempering problem with a strongly varying geometry across temperatures. We consider a geometric path of distributions, $\pi_t = \pi_0^{1-t}\pi_1^tp(t)$, where the reference distribution, $\pi_0$, is a diffuse zero mean Gaussian with covariance $10I_2$, the target, $\pi_1$, is the Rosenbrock distribution \cite{Pagani2022} and $p(t)$ are the probabilities assigned to each $t$ (also known as the `affinities' \cite{BironLattes2025}). The sequence uses the power fraction schedule $t_i = (i/4)^5$ for $i=0,\dots,4$ \cite{Friel2008}. Temperature affinities were chosen so that the marginal distribution on temperature was approximately uniform.

We used the automatic PDMP sampler with the order-1 approximation to the event rate. As a benchmark, we compared against HMC-NUTS with three fixed step sizes, $\epsilon \in \{0.05,0.07,0.10\}$. Each method was run for $10^5$ iterations and the experiment was repeated five times.

\begin{figure}[!h]
    \centering
    \includegraphics[width=0.8\linewidth]{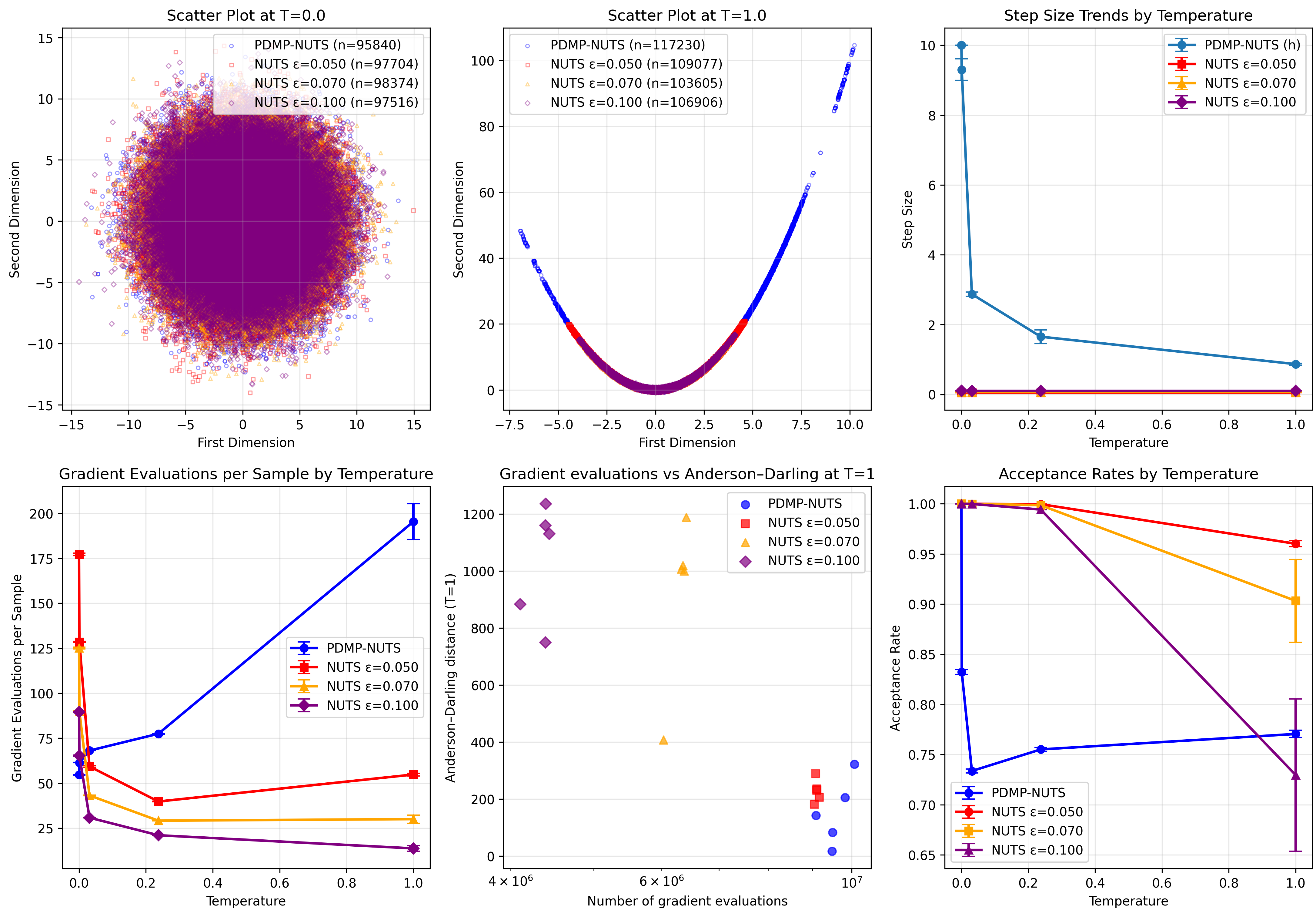}
    \caption{Comparison between HMC with different step sizes, and order-1 MHBPS-NUTS for the simulated tempering example.}
    \label{fig:tempering-comparison}
\end{figure}

Figure \ref{fig:tempering-comparison} shows a clear difference in robustness across the temperature path. At the reference distribution, all methods produce samples consistent with the broad Gaussian reference. However, when conditioned on being at the target, $t=1$, the HMC-NUTS samplers exhibit visibly poorer exploration of the tail regions, whereas the PDMP-based method reaches substantially further along the curved support of the target. This discrepancy is also reflected quantitatively by the Anderson--Darling distance at the target distribution. Despite operating at a computational budget comparable to, or in some cases smaller than, the HMC competitors, the automatic PDMP method achieves markedly better distributional fidelity in the tails. We see from the figure that the inability of HMC-NUTS to adapt the step size to the ideal temperature results in a low $\epsilon$ required to effectively sample the target, and consequently a large computational budget for the lower temperatures for HMC. The PDMP sampler in comparison attains stable acceptance at about $80\%$ (except at the reference distribution, where the order 1 approximation is exact) and the computation is not too large for the early temperatures.

\newpage

\section{Discussion}
\label{sec:disc}

In this work, we have introduced a novel algorithm that advances the practicality of PDMP-based methods for generic MCMC sampling problems. Our approach eliminates the need for a priori bounds on event rates while maintaining asymptotic exactness, thereby overcoming a significant limitation of existing PDMP methods. 

Beyond this key advantage, our algorithm allows for local adaptivity with respect to both discretisation in time and the duration of simulated trajectories, providing increased flexibility in exploring complex probability distributions. Empirical results demonstrate that it outperforms HMC on challenging distributions such as Neal’s funnel. Furthermore, adaptivity is more naturally incorporated in our framework compared to HMC: while adaptive step-size algorithms for HMC must keep the step size fixed within each iteration, our approach allows fully adaptive step-size selection at all times.

Building on the theory developed here, recent work has already successfully introduced strategies for local mass matrix adaptation in PDMPs \cite{chevallier2025covariance}. The PDMP constructed in that subsequent work is highly complex, making it practically impossible to derive viable a priori bounds. This demonstrates how eliminating the need for such bounds directly facilitates the development of more advanced samplers.

While our results highlight the practical benefits of PDMPs, several open challenges remain. One key limitation is the computational complexity of our adaptation of the No-U-Turn criterion, which currently scales as $O(n^2)$ per iteration with $n$ the number of events, which is significantly worse than the complexity of the corresponding criterion in HMC. Developing more efficient stopping rules remains an important direction for future research. Finally, while our method provides clear advantages in certain regimes, PDMPs generally exhibit slightly worse scaling with dimension compared to HMC. Bridging this gap requires the development of fundamentally improved PDMP-based samplers.

Overall, our work provides an important step towards making PDMPs a practical and competitive alternative to existing MCMC methods. Continued research into their computational efficiency and scalability will be crucial for broader adoption in Bayesian inference and related applications.


\newpage

\section{Acknowledgments}
\label{sec:Acknowledgments}
Matthew Sutton acknowledges the financial support from the Australian Research Council under the grant DE250101447.

\bibliography{main}
\bibliographystyle{ieeetr}
\newpage

\appendix
\section{Order 1 adaptation}
\label{appendix:order-1-adaptation}

Here we present a rule for the order-1 approximation, satisfying Assumption 2. For order-1 consider the Talyor's expansion of $\lambda(s) = \max(0, f(s))$ where, 
\begin{align*}
f(s) &= f(0)+f'(0)s+\frac{1}{2}f''(0)s^2+\frac{1}{6}f'''(0)s^3+O(s^4), \\ 
I_h &= \int_0^hf(s)ds = f(0)h+\frac{1}{2}f'(0)h^2+\frac{1}{6}f''(0)h^3+O(h^4).
\end{align*}
The order 1 approximation is $\bar{f}(s) = f(\left\lfloor s\right\rfloor_h) + \frac{1}{h}(s-\left\lfloor s\right\rfloor_h)(f(\left\lfloor s\right\rfloor_h + h) - f(\left\lfloor s\right\rfloor_h))$. For $s<h$ we have $\bar{f}(s) = f(0)+ \frac{s}{h}f(h)-\frac{s}{h}f(0)$ and the approximate integral is,
\begin{align*}
\bar{I}_h &= \int_0^h\bar{f}(s)ds = \frac{h}{2}(f(h) + f(0))    \\
&= hf(0)+\frac{1}{2}f'(0)h^2+\frac{1}{4}f''(0)h^3+O(h^4).
\end{align*}
The approximation error is thus $\bar{I}_h - I_h = \frac{f''(0)}{12}h^3 + \mathcal{O}(h^4)$. By defining the error coefficient $K = \frac{f''(0)}{12}$, we seek a step size $h = (\text{tol}/K)^{1/3}$ to bound the error by a given tolerance.

To estimate $K$, we compare a single step of size $h$ ($\bar{I}_{h,1}$) with two steps of size $h/2$ ($\bar{I}_{h/2,2}$):
\begin{align*}
    \bar{I}_{h,1} &= I_h + K h^3 + \mathcal{O}(h^4) \\
    \bar{I}_{h/2,2} &= I_h + 2 K (h/2)^3 + \mathcal{O}(h^4) = I_h + \frac{K}{4}h^3 + \mathcal{O}(h^4)
\end{align*}
Subtracting these yields $\bar{I}_{h,1} - \bar{I}_{h/2,2} = \frac{3}{4} K h^3 + \mathcal{O}(h^4)$. We therefore estimate $K$ as $\hat{K} = \frac{4}{3h^3}(\bar{I}_{h,1} - \bar{I}_{h/2,2})$, which satisfies $\hat{K} = K + \mathcal{O}(h)$.

The adapted step size $h$ is computed as:
\begin{equation}
    h = h_{guess} \left( \frac{3 \cdot \text{tol}}{4 |\bar{I}_{h_{guess},1} - \bar{I}_{h_{guess}/2,2}|} \right)^{1/3}
\end{equation}
where $\bar{I}_{h_{guess},1} = \frac{h_{guess}}{2}(f(h_{guess}) + f(0))$ and $\bar{I}_{h_{guess}/2,2} = \frac{h_{guess}}{4}(f(h_{guess}) + 2f(h_{guess}/2) + f(0))$.

\paragraph{proof of Proposition \ref{prop:scale-invariance-order-1} (scale invariance)}

For the scaled target, the rate is $\lambda_\sigma(t) = \sigma \lambda(\sigma t)$, implying $f_\sigma(t) = \sigma f(\sigma t)$. The second derivative satisfies $f''_\sigma(0) = \sigma^3 f''(0)$. 
The true error coefficient for the scaled process is $K_\sigma = \frac{1}{12}f''_\sigma(0) = \sigma^3 K$, where $K$ is the coefficient for the unscaled target.

Applying the estimation formula to the scaled process, we obtain:
\[
    \hat{K}_\sigma = K_\sigma + \mathcal{O}(h_{guess}) = \sigma^3 K + \mathcal{O}(h_{guess}).
\]
The unscaled empirical step size satisfies $h = (\text{tol}/|\hat{K}|)^{1/3} = (\text{tol}/|K|)^{1/3}(1 + \mathcal{O}(h_{guess}))$. Substituting $\hat{K}_\sigma$ into the scaled step size formula $h_\sigma = (\text{tol}/|\hat{K}_\sigma|)^{1/3}$ yields:
\[
    h_\sigma = \left( \frac{\text{tol}}{|\sigma^3 K + \mathcal{O}(h_{guess})|} \right)^{1/3} = \frac{1}{\sigma} \left( \frac{\text{tol}}{|K|} \right)^{1/3} \left( 1 + \mathcal{O}(h_{guess}) \right) = \frac{1}{\sigma} h \left( 1 + \mathcal{O}(h_{guess}) \right).
\]

\section{Technical lemma}

\begin{lemma}
    \label{lemma:technical-vol-change}
    Let $E$ and $F$ be two spaces. Let $f$ be defined as
    \begin{align*}
        f: \mathbb{R}^2 \times E &\rightarrow \mathbb{R}^2 \times F \\
        (x,t,x_E) &\mapsto (x - t,-t,f_E(x_E)),
    \end{align*}
    and 
    \begin{align*}
        h: \mathbb{R} \times E &\rightarrow \{0\} \times \mathbb{R} \times F \\
        (x,x_E) &\mapsto f(x,x,x_E).
    \end{align*}
    Furthermore, assume that $f$ has a change of volume $\Psi$, that is for every continuous function $\varphi$ with compact support, 
    \[\int_{\mathbb{R}^2 \times E} \varphi(f(x,t,x_E)) dxdtdx_E = \int_{\mathbb{R}^2 \times F} \varphi(y_1,y_2,y_F) \Psi(y_1,y_2,y_F) dy_1dy_2dy_F. \]
    Also assume that $f$ and $\Psi$ are continuous in $t$.
    Then the volume change associated to $h$ has density $\Psi(0,y_2,y_F)$.
\end{lemma}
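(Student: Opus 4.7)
The plan is to exploit the product structure of $f$: the map $(x, t) \mapsto (x - t, -t)$ is an affine bijection of $\mathbb{R}^2$ with unit absolute Jacobian, while the $E$-to-$F$ piece $x_E \mapsto f_E(x_E)$ is independent of $(x, t)$. This suggests that $\Psi(y_1, y_2, y_F)$ should in fact depend only on $y_F$, through the pushforward density of the reference measure on $E$ by $f_E$, which would make the conclusion essentially tautological.

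Concretely, I would substitute $u = x - t$, $v = -t$ in the hypothesis to obtain
\[
    \int_{\mathbb{R}^2 \times E} \varphi(f(x,t,x_E))\, dx\, dt\, dx_E \;=\; \int_{\mathbb{R}^2 \times E} \varphi(u, v, f_E(x_E))\, du\, dv\, dx_E,
\]
and compare this with $\int \varphi\, \Psi\, dy_1\,dy_2\,dy_F$ by testing against product functions of the form $\varphi(y_1, y_2, y_F) = \phi_1(y_1) \phi_2(y_2) \chi(y_F)$. Fubini on both sides peels off the $y_1$ and $y_2$ integrals and forces $\Psi(y_1, y_2, y_F) = \tilde\Psi(y_F)$ almost everywhere, where $\tilde\Psi$ is the density of the pushforward of the reference measure on $E$ by $f_E$. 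For $h$, the change of variables $v = -x$ then gives
\[
    \int_{\mathbb{R} \times E} \varphi(h(x, x_E))\, dx\, dx_E \;=\; \int_{\mathbb{R} \times E} \varphi(0, v, f_E(x_E))\, dv\, dx_E \;=\; \int_{\mathbb{R} \times F} \varphi(0, v, y_F)\, \tilde\Psi(y_F)\, dv\, dy_F,
\]
and since $\tilde\Psi(y_F) = \Psi(0, v, y_F)$, this is exactly the claim.

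The main subtlety is that the identification $\Psi(y_1, y_2, y_F) = \tilde\Psi(y_F)$ holds only almost everywhere on $\mathbb{R}^2 \times F$, whereas the conclusion requires evaluating $\Psi$ literally on the codimension-one slice $\{y_1 = 0\}$. This is precisely where the continuity hypothesis on $\Psi$ in $t$ (equivalently, in $y_2$) is used: combined with the translation invariance $f(x+s,t,x_E) = f(x,t,x_E) + (s,0,0)$, which forces $\Psi$ to be a.e.~independent of $y_1$, it lets us select a canonical version of $\Psi$ that is genuinely constant in $y_1$ and continuous in $y_2$, so that the slice formula $\Psi(0, y_2, y_F)$ is unambiguous as a density. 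A cleaner alternative that bypasses pointwise evaluation is to test the original identity against $\varphi_\epsilon(y_1, y_2, y_F) = \phi_\epsilon(y_1)\chi(y_2, y_F)$ with $\phi_\epsilon$ an approximation of $\delta_0$ of unit mass and pass to the limit $\epsilon \to 0$ using continuity of $\Psi$ in $y_2$.
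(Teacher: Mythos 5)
Your reduction is sound as far as it goes: for the map as literally written, with $f_E$ depending only on $x_E$, the substitution $(u,v)=(x-t,-t)$ plus testing against product functions does force $\Psi(y_1,y_2,y_F)=\tilde{\Psi}(y_F)$ for almost every $(y_1,y_2,y_F)$, where $\tilde{\Psi}$ is the density of $(f_E)_*\mu_E$. The gap is in the only step that carries real weight: evaluating $\Psi$ on the null slice $\{y_1=0\}$. Continuity of $\Psi$ ``in $y_2$'' (your parenthetical reading of ``continuous in $t$'') cannot reach that slice: take $\Psi(y_1,y_2,y_F)=\tilde{\Psi}(y_F)$ for $y_1\neq 0$ and $\Psi(0,y_2,y_F)=\tilde{\Psi}(y_F)+1$. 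This version is continuous in $y_2$, satisfies the defining integral identity (which only determines $\Psi$ up to null sets), and violates the conclusion. Neither of your proposed fixes closes this: ``selecting a canonical version'' proves the statement for a modified function rather than for the given $\Psi$ (and the downstream use in Lemma~\ref{lemma:g-volume-change} and Proposition~\ref{prop:law-l-approximated} needs the specific continuous version $\Psi_l$), while the approximate-identity test $\phi_\epsilon(y_1)\chi(y_2,y_F)$ requires continuity of $\Psi$ \emph{transverse} to the slice, i.e.\ in $y_1$, which is not assumed.

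The hypothesis must instead be read as continuity of $t\mapsto\Psi(x-t,-t,y_F)$, i.e.\ along the curves traced by $f$ as $t$ varies; these are transverse to $\{y_1=0\}$, the curve indexed by $x$ meeting the slice exactly at $(0,-x)$, so the slice is swept out bijectively as $x$ ranges over $\mathbb{R}$. That is how the paper argues: with $\phi(x,t)=\int_E\varphi_3(f_E(\cdot))\,dx_E$ and $\Phi(x,t)=\int_F\varphi_3(y_F)\Psi(x-t,-t,y_F)\,dy_F$, the hypothesis gives $\phi=\Phi$ a.e.\ on $\mathbb{R}^2$; Fubini in the $(x,t)$ coordinates gives, for a.e.\ $x$, equality for a.e.\ $t$; continuity in $t$ upgrades this to all $t$; and setting $t=x$ lands on the slice for a.e.\ $x$, which is exactly an a.e.\ statement in the slice variable $y_2=-x$, as the density claim requires. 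Note finally that the paper's proof (and its application to $g$, whose $F$-component is $R_{-t}(\cdot)$) actually treats $f_E$ as depending on $(x,t)$ as well; in that generality $\Psi$ genuinely depends on $y_2$, your reduction to $\tilde{\Psi}(y_F)$ is unavailable, but the $\phi=\Phi$ argument goes through unchanged. You should rework the final step along these lines.
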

\begin{proof}
    With $\varphi(y_1,y_2,y_F) = \varphi(y_1) \varphi(y_2) \varphi(y_F)$:
    \begin{align*}
        \int_{\mathbb{R}^2 \times E} \varphi_1(x-t) \varphi_2(-t) \varphi_3(f_E(x,t,x_E) dxdtdx_E &= \int_{\mathbb{R}^2 \times F} \varphi(y_1,y_2,y_F) \Psi(y_1,y_2,y_F) dy_1dy_2dy_F \\
        &= \int_{\mathbb{R}^2 \times F} \varphi_1(y_1) \varphi_2(y_2) \varphi_3(y_F) \Psi(y_1,y_2,y_F) dy_1dy_2dy_F \\
        &= \int_{\mathbb{R}^2 \times F} \varphi_1(y_1) \varphi_2(-t) \varphi_3(y_F) \Psi(y_1,-t,y_F) dy_1dtdy_F \\
        &= \int_{\mathbb{R}^2 \times F} \varphi_1(x-t) \varphi_2(-t) \varphi_3(y_F) \Psi(x-t,-t,y_F) dxdtdy_F.
    \end{align*}

    We write $\phi(x,t) = \int_E \varphi_3(f_E(x,t,x_E) dx_E$ and $\Phi(x,t) = \int_F \varphi_3(y_F) \Psi(x-t,-t,y_F) dy_F$. Hence, almost surely on $\mathbb{R}^2$ we have that $\phi(x,t) = \Phi(x,t)$. 
    By continuity on $t$, for almost every $x$, $\phi(x,t) = \Phi(x,t)$ for all $t$. Hence, for almost every $x$, $\phi(x,x) = \Phi(x,x)$.

    \begin{align*}
        \int_{\mathbb{R}\times E} \varphi(h(x,x_E)) dxdx_E&= \int_{\mathbb{R}} \varphi_1(0) \varphi_2(-x) \phi(x,x) dx\\
        &= \int_{\mathbb{R}} \varphi_1(0) \varphi_2(-x) \Phi(x,x) dx \\
        &= \int_{\mathbb{R} \times F} \varphi_1(0) \varphi_2(-x) \varphi_3(y_F) \Psi(0,-x,y_F) dx dy_F \\
        &= \int_{\mathbb{R} \times F} \varphi_1(0) \varphi_2(x) \varphi_3(y_F) \Psi(0,x,y_F) dx dy_F,
    \end{align*}
    which concludes the proof.
\end{proof}

\section{Proof of proposition \ref{prop:law_PlX}}
\label{sec:nuts-proof}

For a given $a < 0 < b$, we write $W_{[a,b]}$ for the set of paths on $[a,b]$. This set is:
\[
    W_{[a,b]} = E \times \cup_k (\mathbb{R^-} \times V^r)^k \times \cup_k (\mathbb{R^+} \times V)^k,
\]
which corresponds to the initial point at time $0$, then all the events backward in time, and then all the events forward in time. Note that unlike in the previous section, backward events are negative. 

In Section \ref{sec:metropolis}, we defined $R$ as the application that reversed a path  of length $T$. We shall call this application $R_T$ and the associated change of volume $\psi_T$. We will also consider that this application flips the signs of the jump times, unlike in Section \ref{sec:metropolis}.

\begin{proposition}[probability on path space]
\begin{enumerate}
    \item The PDMP defines a probability density $p_{[a,b]}$ on $W_{[a,b]}$ with respect to the canonical measure of $W_{[a,b]}$, and 
    \item the conditional probabilities $p((z_t)_{t\in [a,0)} | z_0)$ and $p((z_t)_{t \in (0,b]} | z_0)$ have densities on the spaces $\{z_0\} \times \cup_k (\mathbb{R^-} \times V^r)^k$ and $\{z_0\} \times  \cup_k (\mathbb{R^+} \times V)^k$ respectively.
\end{enumerate}
\end{proposition}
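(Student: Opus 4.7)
The plan is to split a path in $W_{[a,b]}$ into three independent pieces given $z_0$: the initial state at time $0$, the backward portion on $[a,0)$, and the forward portion on $(0,b]$. The Markov property of the PDMP at time $0$ will ensure that, conditionally on $z_0$, the forward and backward portions are independent, which reduces the problem to separately producing densities for each piece.

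First, I would dispatch the forward portion. Under the invariant measure $\mu$, the forward process $(z_t)_{t \in (0,b]}$ started from $z_0$ is exactly a PDMP with characteristics $(\phi, \lambda, Q)$ on $(0,b]$. Assumption \ref{ass:jump-param} identifies its path space with $\bigcup_k (]0,b[ \times V)^k$, and Proposition \ref{prop:path-density} (applied to the interval $(0,b]$ rather than $(0,T]$) then gives a conditional density
\[
p\bigl((z_t)_{t\in(0,b]} \mid z_0\bigr)
\]
with respect to the canonical product measure on $\{z_0\} \times \bigcup_k (]0,b[ \times V)^k$, with the explicit formula inherited from Proposition \ref{prop:path-density}.

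The backward portion is handled symmetrically by invoking the time-reversal. By the Practical time-reversal proposition of Section \ref{subsec:pdmp-reverse}, the process $(z_{-s})_{s \in [0,-a]}$ is a PDMP with deterministic flow $\phi^{-1}$, jump rate $\lambda^r$, and jump kernel $Q^r$, and we assumed an analogue of Assumption \ref{ass:jump-param} providing a parametrisation through $V^r$. A direct application of Proposition \ref{prop:path-density} to this reversed PDMP produces a conditional density for the path $(z_{-s})_{s \in (0,-a]}$ on $\{z_0\} \times \bigcup_k (]0,-a[ \times V^r)^k$. A change of variables $\tau \mapsto -\tau$ on the jump times — which is canonical-measure preserving since it is a translation/reflection on $\mathbb{R}$ — then yields a conditional density $p\bigl((z_t)_{t\in[a,0)} \mid z_0\bigr)$ on $\{z_0\} \times \bigcup_k (]a,0[ \times V^r)^k$, as required by item 2.

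Finally, the full density on $W_{[a,b]}$ is obtained by multiplying the marginal density $\mu(z_0)$ of $z_0$ (which exists by assumption) with the two conditional densities: the strong Markov property of the PDMP at $t=0$, together with the skew-reversibility established in the previous section (which guarantees that running backward is consistent with running forward at equilibrium), ensures that the forward and backward pieces are conditionally independent given $z_0$. Hence
\[
p_{[a,b]}\bigl(z_0, (z_t)_{t\in[a,0)}, (z_t)_{t\in(0,b]}\bigr)
 = \mu(z_0)\,p\bigl((z_t)_{t\in[a,0)}\mid z_0\bigr)\,p\bigl((z_t)_{t\in(0,b]}\mid z_0\bigr).
\]
The main subtlety is the independence step: one must justify conditional independence of the past and the future at a single time point for a PDMP at stationarity. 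I would argue this via the usual Markov decomposition, combined with the fact that both the forward and reversed processes use $z_0$ as a Markov ``initial condition'' and do not share any additional source of randomness. The rest of the argument is bookkeeping on the parametrisations $V$ and $V^r$ and the product measures.
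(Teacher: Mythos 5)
Your argument is correct and is exactly the one the paper leaves implicit: the paper states this proposition without proof, but the decomposition $\mu(z_0)\,p(\text{backward}\mid z_0)\,p(\text{forward}\mid z_0)$ you derive — via Proposition \ref{prop:path-density} for the forward piece, the time-reversed PDMP (with the reverse jump-kernel parametrisation through $V^r$) for the backward piece, and the Markov property at stationarity to glue them — is precisely the factorisation the paper uses immediately afterwards in the proof of Proposition \ref{prop:nuts-proof-law-l}. No gaps; the conditional independence of past and future given $z_0$ that you flag is the standard Markov decomposition at a fixed time and needs only stationarity, not skew-reversibility.
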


To avoid confusion, we denote $q((z_t)_{[0,b]} | x) = p((z_t)_{[0,b]}| z_0 = x)$ the conditional probability of a forward path starting at state $x$, and $q^r((z_t)_{[0,-a]}|x) = p((z_t)_{[a,0]}|z_a = x)$ the probability of a backward path starting at $x$.


Let us define the set of paths stopped on the left as
\[
    F_g = \left(E\times \cup_{k_b,k_f} (\mathbb{R}\times V^r)^{k_b}\times (\mathbb{R}\times V)^{+k_f} \right) \times \mathbb{R},
\]
the set of paths stopped on the right as
\[
    F_d = \mathbb{R} \times \left(E\times \cup_{k_b,k_f} (\mathbb{R}\times V^r)^{k_b}\times (\mathbb{R}\times V)^{+k_f} \right),
\]
and finally the set of stopped paths as $F = F_g \cup F_d$ .

Note that we can decide that the stopping criterion stops on paths of length greater than some $M > 0$, which can be arbitrarily big. Hence it is natural to look at the space $W_M = W_{[-M,M]}$ of paths on $[-M,M]$, which has density written as $p_M$ for simplicity.

Let $f :W_M \times [0,1] \mapsto F$ be the application that maps a given path $w \in W_M$ and $\alpha \in [0,1]$ to the stopped path given by the algorithm. 

A given path $w \in W_M$ can be written as $(z, t_1^b,v^b_1...,t_{k_b}^b,v_{k_b}^b,t_1^f,v_1^f,...,t_{k_f}^f,v_{k_f}^f)$.

\begin{lemma}[probability density on $F_g$]
\label{lemma:F_g-measure}
Let $S_g: F_g \mapsto \{0,1\}$ be the function that is 1 if the path was actually stopped on the left, and is $0$ otherwise. Then, the stopped PDMP induces a probability density on $F_g$ with density:
\[
    p_{F_g}(z,t^b_{k_1},v^b_{k_1},...,t^f_{k_2},v^f_{k_2},b) = p_{W_{[t^b_{k_1},b]}}(z,t^b_{k_1},v^b_{k_1},...,t^f_{k_2},v^f_{k_2}) \left|\frac{-t^b_{k_1}}{(b-t_{k_1}^b)^2} \right| S_g(z,t^b_{k_1},v^b_{k_1},...,t^f_{k_2},v^f_{k_2},b).
\]
\end{lemma}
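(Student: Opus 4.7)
The plan is to realise the algorithm as a deterministic map $f : W_M \times [0,1] \to F$ sending $(w,\alpha)$ to the restriction of $w$ to $[a,b]$, where $a = -\alpha T(w,\alpha)$ and $b = (1-\alpha) T(w,\alpha)$, and then to push the product measure $p_M(dw) \otimes d\alpha$ forward along $f$. Since $\alpha$ is independent of $w$ and uniform on $[0,1]$, the joint density on $W_M \times [0,1]$ is simply $p_M(w)$, so the whole statement reduces to a change of variables combined with a marginalisation over the parts of $w$ that lie outside $[a,b]$.

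The structural key is Assumption \ref{ass:criterion-events}: because the criterion only depends on event positions, on the set $\{S_g = 1\}$ where termination occurs on the left, the endpoint $a$ must coincide with some backward event time $t^b_{k_1}$ of $w$, and is therefore locally constant in $\alpha$ for $w$ fixed, whereas $b$ varies smoothly. Writing $T = -a/\alpha$ yields $b(\alpha) = -a(1-\alpha)/\alpha$, and substituting $\alpha = -a/(b-a)$ gives
\[
\left| \frac{db}{d\alpha} \right| \;=\; \frac{-a}{\alpha^2} \;=\; \frac{(b-a)^2}{-a}, \qquad \left| \frac{d\alpha}{db} \right| \;=\; \frac{-a}{(b-a)^2} \;=\; \frac{-t^b_{k_1}}{(b - t^b_{k_1})^2},
\]
which is exactly the Jacobian factor appearing in the statement.

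What remains is to marginalise out the portion of $w$ living outside $[a,b]$. By the Markov decomposition of path densities in Proposition \ref{prop:path-density} (applied backward from $z_a$ and forward from $z_b$), $p_M(w)$ factorises as the density $p_{W_{[a,b]}}$ of the restriction times the conditional densities of the two extensions given the boundary states, and those conditional extension densities integrate to one when marginalised. Splicing this marginalisation together with the change of variables from $\alpha$ to $b$, and attaching $S_g$ to restrict to the left-stopping event, produces exactly the announced formula for $p_{F_g}$.

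The main technical obstacle is bookkeeping across the disjoint-union stratification of $W_M$: each stratum is indexed by the numbers of backward and forward events lying inside and outside $[a,b]$, and one must verify that $f$ is, on each stratum, a piecewise smooth map which is a local diffeomorphism in the $\alpha$-direction on $\{S_g = 1\}$. One must also discard the null set where the criterion triggers simultaneously on both sides, or where $T(w,\cdot)$ fails to be differentiable at the relevant $\alpha$; on the complement the computation above applies verbatim.
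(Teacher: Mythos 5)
Your proposal is correct and follows essentially the same route as the paper: both realise the construction as the pushforward of $p_M \otimes \mathrm{Unif}(0,1)$ under the stopping map, exploit Assumption \ref{ass:criterion-events} to note that on $\{S_g=1\}$ the left endpoint is pinned at a backward event so only $b$ varies with $\alpha$, perform the change of variables $\alpha = -t^b_{k_1}/(b-t^b_{k_1})$ to obtain the Jacobian factor $\bigl|{-t^b_{k_1}}/{(b-t^b_{k_1})^2}\bigr|$, and marginalise the events outside $[t^b_{k_1},b]$ to reduce $p_M$ to $p_{W_{[t^b_{k_1},b]}}$. The stratum-by-stratum bookkeeping you flag as the main technical obstacle is exactly what the paper's displayed sums over $k_b \geq k_1$, $k_f > k_2$ carry out.
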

\begin{proof}
For a given $\alpha \in [0,1]$, assuming that the path is stopped on the left, there exists $k_1\leq k_b,k_2\leq k_f$ such that
\[
    f(w,\alpha) = (z,t_1^b,...,t_{k_1}^b,t_1^f,...,t_{k_2}^f, (1-\alpha) \frac{-t^b_{k_1}}{\alpha})
\]

Let $A \subset F$ of the form $A \subset E \times  \mathbb{R}^{k_1+k_2} \times \mathbb{R} $, and let $P_F$ be the pushforward measure of $p_M \times \text{Unif}(0,1)$ by $f$.
By definition:
\[
P_F(A) = \int_{W_M} \int_0^1 1_A(f(w,\alpha)) p_M(w) d\alpha dw.
\]
We can rewrite this as:
\begin{align*}
    P_F(A) &= \sum_{k_b \geq k_1, k_f> k_2}  \int_{(\mathbb{R}\times V^r)^{t^b_{k_b} - k_1} \times (\mathbb{R}\times V)^{t_{k_f}^f - k_2 - 1}} \int_E \int_{(\mathbb{R}\times V^r)^{k_1} \times (\mathbb{R}\times V)^{k_2+1}} \int_0^1  \\
    &S_g(z,t^b_{k_1},v^b_{k_1},...,t^f_{k_2},v^f_{k_2},(1-\alpha) \frac{-t^b_{k_1}}{\alpha}) 1_A(z,t^b_{k_1},v^b_{k_1},...,t^f_{k_2},v^f_{k_2},(1-\alpha) \frac{-t^b_{k_1}}{\alpha}) \\
    &1_{t^f_{k_2+1} > (1-\alpha) \frac{-t^b_{k_1}}{\alpha}} p_M(...) d \alpha dz d_{t_{k_b}^b} d_{v_{k_b}^b} ... d_{t_{k_f}^f}  d_{v_{k_f}^f}
\end{align*}
The change of variables $b = (1-\alpha) \frac{-t^b_{k_1}}{\alpha}$ (i.e. $\alpha = -t_{k_1}^b/(b-t_{k_1}^b)$) for a fixed $t^b_{k_1}$ yields:
\begin{align*}
    P_F(A) &= \sum_{k_b \geq k_1, k_f> k_2}  \int_{(\mathbb{R}\times V^r)^{t^b_{k_b} - k_1} \times (\mathbb{R}\times V)^{t_{k_f}^f - k_2 - 1}} \int_E \int_{(\mathbb{R}\times V^r)^{k_1} \times (\mathbb{R}\times V)^{k_2+1}} \int_\mathbb{R} \\
    &S_g(z,t^b_{k_1},v^b_{k_1},...,t^f_{k_2},v^f_{k_2},b) 1_A(z,t^b_{k_1},v^b_{k_1},...,t^f_{k_2},v^f_{k_2},b) 1_{b< t_{k_2+1}^f} \\
    &p_M(...) \left|\frac{-t^b_{k_1}}{(-t_{k_1}^b/(b-t_{k_1}^b))^2} \right|^{-1} db dz d_{t_{k_b}^b} d_{v_{k_b}^b} ... d_{t_{k_f}^f}  d_{v_{k_f}^f} \\
    &= \int_E \int_{(\mathbb{R}\times V^r)^{k_1} \times (\mathbb{R}\times V)^{k_2}} \int_\mathbb{R} 
    S_g(z,t^b_{k_1},v^b_{k_1},...,t^f_{k_2},v^f_{k_2},b) 1_A(z,t^b_{k_1},v^b_{k_1},...,t^f_{k_2},v^f_{k_2},b) \\
    & p_{W_{[t^b_{k_1},b]}}(z,t^b_{k_1},v^b_{k_1},...,t^f_{k_2},v^f_{k_2}) \left|\frac{-t^b_{k_1}}{(-t_{k_1}^b/(b-t_{k_1}^b))^2} \right|^{-1} db dz d_{t_{k_1}^b} d_{v_{k_1}^b} ... d_{t_{k_2}^f} d_{v_{k_2}^f},
\end{align*}
which gives us the density on $F_g$:
\[
    p_{F_g}(z,t^b_{k_1},v^b_{k_1},...,t^f_{k_2},v^f_{k_2},b) = p_{W_{[t^b_{k_1},b]}}(z,t^b_{k_1},v^b_{k_1},...,t^f_{k_2},v^f_{k_2}) \left|\frac{-t^b_{k_1}}{(b-t_{k_1}^b)^2} \right| S_g(z,t^b_{k_1},v^b_{k_1},...,t^f_{k_2},v^f_{k_2},b).
\]
\end{proof}
The random variable $(X,l)$ has values in $E \times \cup_k (\mathbb{R^+} \times V)^k \times \mathbb{R}^2$, defined by a path forward starting at time $0$ in $E \times \cup_k (\mathbb{R^+} \times V)^k$, a total length in $\mathbb{R}$, and the starting time (i.e. $l$). However, the path was stopped at an event. Let us take $(z,t_0,v_0,...,t_k,v_k,T,l) \in E \times \cup_k (\mathbb{R^+} \times V)^k \times \mathbb{R}^2$. If the path was stopped backward, then $t_0 = 0$. Otherwise the path was stopped forward, and $z_k = T$. Hence, the state space of $(X,l)$ is:
\[
    F_X = \{ (z,t_0,v_0,...,t_k,v_k,T,l) \in E \times \cup_k (\mathbb{R^+} \times V)^k \times \mathbb{R}^2 | t_0 = 0 \text{ or } t_k = T\}.
\]
In the case $t_0 = 0$, it means that the coordinate associated to $t_0$ can be ignored. On this part of the space, we take the canonical measure defined on $E \times \cup_k \left( \{0\}\times V \times(\mathbb{R^+} \times V)^{k-1} \right) \times \mathbb{R}^2$. We do the same for the part of space where $t_k = T$.
This means that the dimension has been reduced by 1, which compensates for adding the variable $l$, meaning that $F_g$ (and $F_d$), have the same dimension as $F_X$.

Let $g$ be the transformation associating a path in $F_g$ to a path in $F_X$.  In other words, the pushforward measure of $p_{F_g}$ by $g$ is the law of $(X,l)$. The application $g$ applies $R_{t_{k_1}}$ to the reversed part of the path and shifts the rest by time $-t_{k_1}$:
\begin{align*}
    g: F_g & \mapsto E \times \cup_k \left( \{0\}\times V \times(\mathbb{R^+} \times V)^{k-1} \right) \times \mathbb{R}^2 \subset F_X\\
    (z,t^b_{k_1},v^b_{k_1},...,t^f_{k_2},v^f_{k_2},b) &\mapsto (R_{-t_{k_1}}(z,t^b_{k_1},v^b_{k_1},...,t^b_1,v^b_1),t^f_1 - t^b_{k_1},v^f_1,...,t^f_{k_2} - t^b_{k_1},v^f_{k_2},b - t^b_{k_1}, -t_{k_1}^b).
\end{align*}
We could define a similar application $d$ from $F_d$ to $F_X$, which would follow similar arguments.
\begin{lemma}[Volume change of $g$]
    \label{lemma:g-volume-change}
    Assume that for all paths, the mappings $t \mapsto R_t$ and $t \mapsto \psi_t$ are continuous. It then holds that the application $g$ has volume change for $(z,t_1,v_1,...,t_k,v_k,T,l) \in E \times \cup_k \left( (\mathbb{R^+} \times V)^{k} \right) \times \mathbb{R}^2$ given by
    \[
        \Psi_l(z,t_1,v_1,...,t_{k_l},v_{k_l}),
    \]
    where $k_l$ is the number of event times smaller than $l$.
\end{lemma}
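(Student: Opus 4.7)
The plan is to decompose $g$ as a composition of three measure-theoretically elementary maps and compute each of their Jacobians separately.

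First, introduce the change of coordinate $l = -t^b_{k_1}$; this is a sign flip on a single real coordinate, with Jacobian $1$. Next, translate the remaining forward events and the endpoint: map each $t^f_i \mapsto t^f_i + l$ and $b \mapsto b + l = T$, leaving all velocity/jump coordinates unchanged. This is a pure translation, with Jacobian $1$. What remains is to act on the subword $(z, t^b_{k_1}, v^b_{k_1}, \ldots, t^b_1, v^b_1)$, which is a backward segment of duration $l$ ending at $z$ at time $0$; applying the reversal $R_l$ from Section~\ref{sec:metropolis} produces a forward segment of duration $l$ starting at some state at time $0$ and arriving at $z$ at time $l$, and this forward segment is precisely the prefix of the image path in $F_X$ (the portion indexed by times in $[0,l]$, with $t'_0 = 0$ being the former event at $t^b_{k_1}$ after reversal). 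By the definition of $\psi_t$ in Section~\ref{sec:metropolis}, the Jacobian of this reversal is $\psi_l$ evaluated at the reversed segment.

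Putting the three pieces together, the Jacobian of $g$ at the point $(z, t_1, v_1, \ldots, t_k, v_k, T, l)$ equals $\psi_l$ evaluated at the prefix of the image path up to time $l$, which by construction contains exactly $k_l$ events $(t_1, v_1, \ldots, t_{k_l}, v_{k_l})$. This yields $\Psi_l(z, t_1, v_1, \ldots, t_{k_l}, v_{k_l})$ as claimed. The continuity hypotheses on $t \mapsto R_t$ and $t \mapsto \psi_t$ ensure that this composition is measurable in the coordinate $l$ (which is itself a coordinate of the image) and that the formula for the Jacobian depends continuously on $l$, so no ambiguity arises in writing the volume change as a function on $F_X$.

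The main bookkeeping obstacle will be aligning the dimension-reduced target space (where $t'_0 = 0$ is implicit, compensated by the explicit coordinate $l$) with the source coordinates (where $t^b_{k_1}$ is free but the explicit $l$ is absent), and verifying that the reversal $R_l$ acts exactly on the first $k_1$ events of the image path. Once the coordinate matchings are set up carefully, the argument reduces to checking that the two non-reversal steps are isometries of the canonical measure and invoking the definition of $\psi_l$ for the third.
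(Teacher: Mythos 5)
Your decomposition (sign flip, shear, reversal) is a reasonable way to organise the computation, and the first two steps are indeed volume-preserving; the final formula is also correct. But the third step contains the one genuinely delicate point of the lemma, and your justification for it does not work. You write that ``by the definition of $\psi_t$, the Jacobian of this reversal is $\psi_l$ evaluated at the reversed segment.'' The problem is that $\psi_t$ is defined, for each \emph{fixed} $t$, only as a Radon--Nikodym derivative, i.e.\ up to a $\lambda$-null set of paths. In $g$, the parameter of the reversal is not an independent coordinate: it equals $-t^b_{k_1}$, the last backward event time of the very segment being reversed. So you are applying $R_l$ only to paths whose duration is \emph{exactly} $l$ --- a null set for each fixed $l$ --- and ``$\psi_l$ evaluated there'' is simply not determined by the defining property of $\psi_l$. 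Equivalently, $g$ is the restriction of the family $(w,t)\mapsto R_{-t}(w)$ to the diagonal $t=t^b_{k_1}$, which is a null set of the product space, and a change-of-variables identity known a.e.\ on the product does not restrict to a null set.

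This is precisely why the continuity of $t\mapsto R_t$ and $t\mapsto \psi_t$ is assumed, and it is used for more than the ``measurability / no ambiguity'' role you assign to it. The paper's route is to introduce the extra free parameter $t$, compute the volume change of the extended map $f(w,t)=(R_{-t}(\cdots),\ldots,-t)$ (where the Fubini argument is legitimate because $t$ is independent of $w$), and then invoke Lemma~\ref{lemma:technical-vol-change}: there, Fubini converts the a.e.-in-$(x,t)$ equality of the two integrands into, for a.e.\ $x$, equality for a.e.\ $t$, and continuity in $t$ upgrades this to equality for \emph{all} $t$, hence in particular on the diagonal $t=x$. Your proof needs this argument (or an equivalent one) inserted at step~3; without it, the key identification of the Jacobian of the reversal step is unjustified.
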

\begin{proof}
    We define the function $f$ as follows:
    \begin{align*}
    f: F_g \times \mathbb{R} & \mapsto E \times \cup_k \left( (\mathbb{R^+} \times V)^{k} \right) \times \mathbb{R}^2\\
    (z,t^b_{k_1},v^b_{k_1},...,t^f_{k_2},v^f_{k_2},b,t) &\mapsto (R_{-t}(z,t^b_{k_1},v^b_{k_1},...,t^b_1,v^b_1),t^f_1 - t^b_{k_1},v^f_1,...,t^f_{k_2} - t^b_{k_1},v^f_{k_2},b - t^b_{k_1}, -t).
\end{align*}
    This goes into a bigger space than $F_X$. However for $t = t_{k_1}^b$, $f$ and $g$ are the same. The volume change associated to $f$ where $(z,t_1,v_1,...,t_k,v_k,T,l) \in E \times \cup_k \left( (\mathbb{R^+} \times V)^{k} \right) \times \mathbb{R}^2$ is
    \[
        \Psi_l(z,t_1,v_1,...,t_{k_l},v_{k_l}),
    \]
    where $k_l$ is the number of event times smaller than $l$. 
    The event time associated to $t_{k_b}^b$ by $R_{-t}$ is $t_{k_b}^b -t$, therefore we can apply Lemma \ref{lemma:technical-vol-change}, and we deduce that the volume change of $g$ is
    \[
        \Psi_l(z,0,v_1,...,t_{k_l},v_{k_l}).
    \]
\end{proof}
\begin{proposition}[conditional probability of $l|X$]
    \label{prop:nuts-proof-law-l}
    The conditional probability of $l$ knowing $X$ has a density on $\mathbb{R}$ which is
    \begin{align*}
        p(l | X) &\propto l 1_{l \in [0,T]} \text{ if the path is stopped on the left}, \\
        p(l | X) &\propto (T-l) 1_{l \in [0,T]} \text{ if the path is stopped on the right}, 
    \end{align*}
    where $T$ is the length of $X$.
\end{proposition}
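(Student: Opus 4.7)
The plan is to compute the joint density of $(X, l)$ on the left-stopped component of $F_X$ as the pushforward of $p_{F_g}$ under the map $g$, and then read off the conditional $p(l \mid X)$. The right-stopped case will follow symmetrically via the analogous map $d : F_d \to F_X$.

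First, I apply Lemma \ref{lemma:F_g-measure} to express $p_{F_g}$. In the algorithm's coordinates, a left-stopped path has leftmost event time $t^b_{k_1} = -l$ and total length $b - t^b_{k_1} = T$, so the explicit factor appearing in the lemma becomes exactly $\frac{l}{T^2}$.

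Second, I push forward to $F_X$ via $g$. The change-of-variable formula, combined with the Jacobian $\Psi_l$ identified in Lemma \ref{lemma:g-volume-change}, introduces a factor $\Psi_l^{-1}$ on the target. To re-express $p_{W_{[t^b_{k_1}, b]}}(g^{-1}(X, l))$ in terms of the single forward-going path $X$, I invoke \eqref{eq:backward-density} on the backward portion of $g^{-1}(X, l)$: rewriting this reverse-simulated segment as an equivalent forward-going segment introduces a compensating factor $\psi_l$, which exactly cancels the $\Psi_l^{-1}$ coming from the Jacobian. Time-shift invariance of the stationary PDMP path measure (a direct consequence of $\mu$ being invariant together with the Markov property of the path density in Proposition \ref{prop:path-density}) then identifies the remaining factor as $p_{W_{[0,T]}}(X)$, which depends on $l$ only through $T$.

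Combining these steps, the joint density factorises as
\[
    p_{(X,l)}(X, l) \propto p_{W_{[0,T]}}(X) \cdot \frac{l}{T^2} \cdot \mathbf{1}_{[0,T]}(l),
\]
so conditioning on $X$ (and absorbing any $X$-only factors into the normalisation) gives $p(l \mid X) \propto l \cdot \mathbf{1}_{[0,T]}(l)$, as claimed. The analogous argument through $F_d$, where the exposed factor from Lemma \ref{lemma:F_g-measure} becomes $\frac{T-l}{T^2}$, produces the right-stopped density.

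The main obstacle is the bookkeeping of the two reversal-induced factors: the Jacobian $\Psi_l^{-1}$ from Lemma \ref{lemma:g-volume-change} and the $\psi$ introduced when rewriting the backward portion via \eqref{eq:backward-density}. Their cancellation is by design---$g$ performs essentially the same reversal operation as \eqref{eq:backward-density}, so the two factors are mutually inverse---but verifying it cleanly requires careful tracking of which representation (forward vs.\ reverse parametrisation, $V$ vs.\ $V^r$) each segment of the path lives in at each stage of the computation.
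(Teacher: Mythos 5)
Your proof follows the paper's argument essentially verbatim: the law of $(X,l)$ as the pushforward of $p_{F_g}$ by $g$, the explicit factor $l/T^2$ read off from Lemma \ref{lemma:F_g-measure} via $t^b_{k_1}=-l$ and $b-t^b_{k_1}=T$, cancellation of the reversal Jacobian from Lemma \ref{lemma:g-volume-change} against the volume factor arising when \eqref{eq:backward-density} converts the backward segment, and the Markov-property/time-shift identification of the remaining factor with $p_{[0,T]}(X)$, which depends on $l$ only through $T$. The one (inconsequential) discrepancy is a reciprocal convention: the paper writes the pushforward density as $p_{F_g}(g^{-1}(X,l))\,\Psi_l(X_{[0,l]})$ rather than with $\Psi_l^{-1}$, but since you invert both cancelling factors consistently, the conclusion is unchanged.
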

\begin{proof}
We only analyze the $F_g$ case, as the $F_d$ case follow the same logic.
The law of the couple of random variables $(X,l)$ is the pushforward of the probability on $F_g$ (and $F_d)$ by $g$. 

For a couple $(X,l)$ that is built by the algorithm and stopped left (i.e. $S_g = 1$), we write the variable $X$ as a path on $[0,T]$. Using Lemma \ref{lemma:g-volume-change}:
\[
    p(X,l) = p_{F_g}(g^{-1}(X,l)) \Psi_l(X_{[0,l]}).
\]
Using Lemma \ref{lemma:F_g-measure}, we conclude that
\[
    p(X,l) = \pi(X_l) q^r_{[-l,0]}((X_{t+l})_{t\in [-l,0]}|X_l) q_{[0,T-l]}((X_{t+l})_{t\in [T-l,T]}|X_l) \left| \frac{l}{T^2} \right| \Psi_l(X_{[0,l]}).
\]
Using \eqref{eq:backward-density} and the Markov property:
\[
    p(X,l) = p_{[0,T]}((X_t)_{t\in[0,T]}) \left| \frac{l}{T^2} \right|.
\]
The first part $p_{[0,T]}((X_t)_{t\in [0,T]})$ depends only on $X$ and not on $l$. Furthermore, if $(X,l)$ is stopped on the left, $S_g = 1$ for any $l \in [0,T]$, and is $0$ otherwise. Hence:
\[
    p(l | X) \propto l 1_{l \in [0,T]}.
\]
\end{proof}

\end{document}